\newtheorem{theorem}{Theorem}
\newtheorem{theorema}{Theorem}
\newtheorem{theoremb}{Theorem}
\newtheorem{theoremc}{Theorem}
\newtheorem{theoreme}{Theorem}
\newtheorem{rk}[theorema]{Remark}
\newtheorem{lem}[theoremb]{Lemma}
\newtheorem{prop}[theoremc]{Proposition}
\newtheorem{dfn}[theoreme]{Definition}
\newcommand{\comm}[1]{}
\renewcommand\a{\alpha}
\newcommand\af{\mathfrak{a}}
\renewcommand\b{\beta}
\newcommand\C{{\mathbb C}}
\newcommand\E{{\mathcal E}}
\newcommand\g{\mathfrak{g}}
\newcommand\h{\mathfrak{h}}
\renewcommand\l{\lambda}
\newcommand\op[1]{\mathop{\rm #1}\nolimits}
\newcommand\p{\partial}
\newcommand\Pp{{\let\mathcal\mathscr\mathcal P}}
\newcommand\Qq{{\let\mathcal\mathscr\mathcal Q}}
\newcommand\R{{\mathbb R}}
\newcommand\s{\mathfrak{s}}
\newcommand\si{\mathfrak{s}_\infty}
\begin{document}

 \title[Integrable PDE in 4D via symmetry]{Integrable dispersionless PDE in 4D, \\ 
 their symmetry pseudogroups \\ and deformations}
% Integrable symmetric deformations of heavenly type equations in 4D
 \author{Boris Kruglikov, Oleg Morozov}
 \address{Department of Mathematics and Statistics, NT-faculty, University of Troms\o, Troms\o\ 90-37, Norway}
 \email{boris.kruglikov@uit.no}
 \address{Department of Applied Mathematics, AGH University of Science and Technology, Krakow  30-059, Poland}
 \email{morozov@agh.edu.pl}
 \maketitle

 \begin{abstract}
We study integrable non-degenerate Monge-Amp\`ere equations of Hirota type in 4D and demonstrate that 
their symmetry algebras have a distinguished graded structure, uniquely determining the equations. 
This is used to deform these heavenly type equations into new integrable PDE of the second order with
large symmetry pseudogroups. 
We classify the obtained symmetric deformations and discuss self-dual hyper-Hermitian geometry of their solutions,
which encode integrability via the twistor theory.
 \end{abstract}

% Keywords: Monge-Ampere equation, symmetry pseudogroup, grading, Lax pair, area preserving transformations,
% recursion operator, self-duality, Petrov type.

\bigskip

 \section{Introduction and main results}

Solitonic integrable equations in 2D (i.e.\ with 2 independent variables) have few local symmetries, while
as a rule they have infinity of higher (generalized) symmetries.
In contrast, many dispersionless integrable PDEs in 3D and 4D have large local
symmetry pseudogroups. In this paper we discuss an important class of such dispersionless
differential equations in 4D and study their symmetries.

We consider integrable non-degenerate Monge-Amp\`ere equations of Hirota type in 4D.
These were classified by Doubrov and Ferapontov in \cite{DF}. There are 5 such non-linear equations
up to the natural (for this class) equivalence group $\op{Sp}(8)$.
These are heavenly-type equations, important in general relativity, and
it turns out that all of them have 4 copies of the infinite group
$\op{SDiff}(2)$ (of area preserving diffeomorphisms) in the symmetry pseudogroup,
and the equations can be distinguished by the graded structure of the corresponding Lie algebras.

The combinatorial structure behind the grading can be encoded by types similar
to Petrov types in classification of conformal Weyl tensors (or to classes of pencils of conics).
This rough invariant of the contact symmetry algebra can be also read off the geometry of the equation.

We reduce the symmetry algebra by omitting some graded component, i.e.
by restricting to a graded (infinite) Lie subalgebra, and we search integrable equations
within the class of symmetric equations. As the minimal differential invariants are of order 2
(this reflects the Lie remarkable property) the integrable families of PDE
we obtain can be called the symmetric integrable deformations.

The obtained integrable equations in 4D are new, and come in families.
Here are four PDE that deform the first and the modified heavenly equations,
Hussain's and the general heavenly equations:
 \begin{gather*}
u_{xt}u_{yz}-u_{xz}u_{yt}=q_z u_t-q_t u_z+b,\\
u_{yt}-u_{xt}u_{zz}+u_{xz}u_{zt}=Q\, u_{zt},\\
u_{xy}u_{zt}-u_{xz}u_{yt}=Q\, u_{xt},\\
u_{xy}u_{zt}-u_{xz}u_{yt}=Q\,(u_{xy}u_{zt}-u_{xt}u_{yz}).
 \end{gather*}
where $q=q(z,t)$, $b=b(z,t)$ and $Q=Q(t,u_t)$ are arbitrary functions. These equations are integrable
via Lax pair with spectral parameter and they also possess recursion operators. Moreover every
solution of any of this equation determines canonically a self-dual metric, so they can be obtained as reductions
of the (conformal) self-duality equation. In addition, these metric structures have hyper-Hermitian property.

Our method can be used to uncover many more integrable equations by deforming %, using the proposed approach,
from the known integrable models. % (we have a confirmation to this for other equations in 3D and 4D).

The structure of the paper is as follows. In Section \ref{S2} we describe the symmetry of the model equations,
exhibiting 4 copies of the Poisson algebra in the symmetry algebra, discuss natural grading and
compute the differential invariants. Then in Section \ref{S3} we compute the symmetric deformation, and
classify integrable equations among them. The main list is given in Theorem \ref{Thm3}.
In Proposition \ref{Thm3+} we classify some other branches, and then discuss the moduli of the obtained
symmetric families with respect to the natural equivalence. Finally in Section \ref{S4} we discuss the self-dual
geometry standing behind our classification, explain Petrov types via the geometry of the equations,
and compute the recursion operators for the deformed equations. In the Appendix we exhibit another method
of extending the integrable equations and relate it to the method of integrable symmetric deformations.

 \section{Symmetry of integrable 4D Monge-Amp\`ere equations}\label{S2}

Let $M^4$ be a 4-dimensional manifold with coordinates $x^i$, $i=1,\dots,4$.

The integrable Monge-Amp\`ere equations of Hirota type on $M^4$ have the form
($u_{ij}$ are the second derivatives of $u=u(x^i)$, $1\le i\le j\le4$)
 $$
F(u_{11},u_{12},\dots,u_{44})=0.
 $$
The group $\op{Sp}(8)$ preserves the class of these equations (this is the natural equivalence group
for this class) and every integrable PDE of such form is $\op{Sp}(8)$ equivalent (over $\C$) to
one of the following equations:

 \begin{enumerate}
\item[(O)] $u_{11}-u_{22}-u_{33}-u_{44}=0$ \ (linear wave equation) \vphantom{$\frac{1^1}{1^1}$}
\item[(N)] $u_{24}-u_{13}+u_{11}u_{22}-u_{12}^2=0$ \ (second heavenly equation \cite{Pb}) \vphantom{$\frac{A^A}{A^A}$}
\item[(D)] $u_{14}u_{23}-u_{13}u_{24}=1$ \ (first heavenly equation \cite{Pb}) \vphantom{$\frac{A^A}{A^A}$}
\item[(III)] % $u_{13} = u_{12}u_{44}-u_{14}u_{24}$
           $u_{24} = u_{14}u_{33}-u_{13}u_{34}$ \ (modified heavenly equation) \vphantom{$\frac{A^A}{A^A}$}
\item[(II)]  % $u_{11}+u_{22}+u_{13}u_{24}-u_{14}u_{23}=0$
           $u_{14}=u_{13}u_{24}-u_{12}u_{34}$ \ (Hussain's equation\footnote{The usual Hussain's equation
           $u_{11}+u_{22}+u_{13}u_{24}-u_{14}u_{23}=0$
           is equivalent to this form via a complex change of coordinates.}) \vphantom{$\frac{A^A}{A^A}$}
\item[(I)] $\alpha u_{12}u_{34} + \beta u_{13}u_{24}-(\alpha+\beta) u_{14}u_{23} = 0$ \ (general
heavenly). \vphantom{$\frac{A^A}{A^A}$}
 \end{enumerate}

The list and classification are due to \cite{DF}, but the enumeration of the equations is ours
and it will be explained below. Let us first describe the contact symmetry algebras of these PDEs.

 \subsection{Symmetry algebras of the equations}\label{SAE}

Below we use the following conventions: any contact symmetry $X_f$ is uniquely represented by its
generating function $f\in C^\infty(J^1(\R^4))$;
the contact symmetry algebra $\g=\mathfrak{cont}(\E)$ of an equation $\E$
is a Lie algebra with respect to the natural Jacobi bracket on $C^\infty(J^1(\R^4))$ given by $[X_f,X_g]=X_{\{f,g\}}$;
as an abstract Lie algebra $\g$ is obtained from its infinite part (subalgebra) $\si$ by
a finite-dimensional "right" extension $\s_\diamond$ (extension by derivations):
 $$
0\to\si\longrightarrow\g\longrightarrow\s_\diamond\to0.
 $$
The index indicates the grading and the superscript indicates the commuting branches, so that
$[\g_i^\a,\g_j^\b]\subset\delta^{\a\b}\g_{i+j}^\a$ (if $i+j$ is not an admissible grading, then $\g^\a_{i+j}=0$).
For the infinite part the grading will be denoted by $\af_i$, so $\si=\oplus\af_i^\gamma$
(superscript $\gamma$ can be
empty or prime); the grading of the finite-dimensional part is $\h_i$: $\s_\diamond=\oplus\h_i$.

Everywhere in this section $\af_i\simeq\Pp=C^\infty(M)$ for some choice $M=\R^2(p,q)$;
notice the isomorphism $\Pp/\R\simeq\op{SDiff}(2)$ (the latter is the algebra of Hamiltonian vector fields on $M$,
that is the Lie algebra of the group of area preserving transformations,
denoted, by the abuse of notations, by the same symbol),
$f\mapsto X_f$. We will indicate below the choice of $(p,q)$ and equip this $M$ with the standard Poisson bracket
 $$
\{A(p,q),B(p,q)\}=A_pB_q-A_qB_p.
 $$
Then the bracket $[\af_i^\a,\af_j^\a]\subset\af_{i+j}^\a$ is given by the rule
(again, if $i+j$ fall out of range of the grading the corresponding bracket has to be zero):
 $$
[A_i,B_j]=\{A,B\}_{i+j}
 $$

 \begin{theorem}\label{Thm1}
For nonlinear Monge-Amp\`ere equations from the above list $\si$ is the derived algebra of $\g$
(more precisely $\si=[\g,\g]$ in the cases {\rm D}, {\rm II}, {\rm I} and
$\si=[[\g,\g],[\g,\g]]$ in the cases {\rm N}, {\rm III}).
We have:
 \begin{itemize}
\item[(N)] $\si=\af_0\oplus\af_1\oplus\af_2\oplus\af_3$; \hskip23pt $\s_\diamond=\R^2\ltimes\R^1$.
 \vphantom{$\frac{A^A}{A^A}$}
\item[(D)] $\si=(\af_0'\oplus\af_1')\oplus(\af_0''\oplus\af_1'')$; \hskip4pt $\s_\diamond=\R^3$.
 \vphantom{$\frac{A^A}{A^A}$}
\item[(III)] $\si=(\af_0'\oplus\af_1'\oplus\af_2')\oplus\af_0''$; \hskip13pt $\s_\diamond=\R^2\ltimes\R^1$.
 \vphantom{$\frac{A^A}{A^A}$}
\item[(II)] $\si=(\af_0'\oplus\af_1')\oplus\af_0''\oplus\af_0'''$; \hskip11pt $\s_\diamond=\R^1$.
 \vphantom{$\frac{A^A}{A^A}$}
\item[(I)] $\si=\af_0'\oplus\af_0''\oplus\af_0'''\oplus\af_0^{\textsl{iv}}$; \hskip18pt $\s_\diamond=\R^1$.
 \vphantom{$\frac{A^A}{A^A}$}
 \end{itemize}
 \end{theorem}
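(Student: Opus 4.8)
The plan is to prove the statement by computing the contact symmetry algebra $\g = \mathfrak{cont}(\E)$ of each of the five nonlinear equations directly and then reading off its graded structure. A contact field $X_f$, with generating function $f = f(x^i, u, u_i)$ on $J^1(\R^4)$, is a symmetry of $\E = \{F = 0\}$ precisely when its second prolongation satisfies $X_f^{(2)}(F) \equiv 0$ modulo $F$. Writing this out and splitting the resulting identity with respect to the second-order variables $u_{ij}$ (after using $F = 0$ to eliminate one of them) yields the determining equations, an overdetermined linear system for $f$. I would integrate this system in each case; the general solution consists of finitely many free constants, spanning the finite-dimensional part $\s_\diamond$, together with several arbitrary functions of two variables, spanning the infinite part $\si$.

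The decisive step is to recognize these arbitrary functions as copies of the Poisson algebra. For each branch I would single out the pair of coordinates $(p,q)$ on which its arbitrary function depends and check that the branch's generating functions are affine-linear in the remaining first-order derivatives $u_i$, so that the Jacobi bracket $\{f, g\}$ restricted to the branch collapses to the Poisson bracket $A_p B_q - A_q B_p$ of the coefficient functions. This identifies every graded slot as $\af_i \simeq \Pp = C^\infty(\R^2)$ and shows that distinct branches commute. The grading is then supplied by $\s_\diamond$: I would locate the grading derivations inside the finite-dimensional part whose adjoint action is diagonalizable on $\si$ with nonnegative integer eigenvalues, and their eigenspace decomposition produces the strings $\af_0 \oplus \cdots \oplus \af_k$ listed for each equation, with $\af_{i+j} = 0$ enforced once $i + j$ exceeds the top degree. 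Verifying $[A_i, B_j] = \{A, B\}_{i+j}$ and $[\af_i^\a, \af_j^\b] \subset \delta^{\a\b}\af_{i+j}^\a$ is then a direct bracket computation.

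With the bracket relations in hand, the derived-series statements follow structurally, and this is where the split between the two lists originates. Each copy of the Poisson algebra is perfect: since the coordinate $p$ lies in $\af_0$ and $\{p, G\} = G_q$, every function is a single bracket, whence $[\af_0^\a, \af_i^\a] = \af_i^\a$ and $[\si, \si] = \si$. As all brackets of $\si$ with $\si$ or with $\s_\diamond$ land in $\si$, one obtains $[\g, \g] = \si \oplus [\s_\diamond, \s_\diamond]$. When $\s_\diamond$ is abelian, which covers D ($\R^3$) together with II and I ($\R^1$), this already gives $\si = [\g, \g]$. When $\s_\diamond = \R^2 \ltimes \R^1$ is non-abelian, namely in cases N and III, the bracket contributes the normal $\R^1$, so $[\g, \g] = \si \oplus \R^1$ with the residual $\R^1$ now acting merely as an abelian algebra of derivations; passing to the second derived algebra strips off this factor and yields $[[\g, \g], [\g, \g]] = \si$, exactly as claimed. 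The isomorphism types of $\s_\diamond$ are read off from the brackets among its generators.

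The main obstacle is organizational rather than conceptual. The determining systems are sizeable, and the content of the theorem lies not in the raw list of symmetries but in the clean graded Poisson normal form. The two delicate points are the choice of branch coordinates $(p,q)$ that makes the Jacobi bracket reduce to the Poisson bracket, which rests on the affine-linearity of the generating functions in the derivatives $u_i$ and is easily spoiled by a careless coordinate choice, and the selection of the grading element in $\s_\diamond$, since the resulting grading is precisely the invariant that distinguishes the five equations. I would run the symmetry computation with computer algebra to control errors, but verify the bracket relations and the derived-series claims by hand.
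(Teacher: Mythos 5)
Your proposal is correct and follows essentially the same route as the paper: the paper's proof of Theorem \ref{Thm1} is exactly the direct computation of the contact symmetry determining equations, with the non-routine part being the organization of the output into the graded Poisson branches $\af_i^\a\simeq\Pp$ (via explicit generating-function isomorphisms for canonical pairs such as $(t,z)$, $(x,y)$, or $(t,u_t)$ --- note the last involves a derivative coordinate, so the mechanism there is dependence on a single conjugate pair rather than affine-linearity) together with the finite part $\h$. Your structural deduction of $\si=[\g,\g]$ versus $\si=[[\g,\g],[\g,\g]]$ from perfectness of each Poisson branch (every function being a single bracket $\{p,G\}=G_q$) and the abelian versus Heisenberg structure of $\s_\diamond$ is sound and makes explicit an argument the paper asserts without detail.
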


 \begin{proof}
The proof of this theorem is the direct computation. Non-routine part is to organize the
result to observe the graded structure. Denoting\footnote{This order of variables $(x,y,z,t)=(x^1,x^2,x^3,x^4)$
is fixed throughout the paper.}
$x=x^1,y=x^2,z=x^3,t=x^4$, we get the following output:

\smallskip

(N) Here $\oplus_{i=0}^3\af_i\simeq\Pp_0\oplus\Pp_1\oplus\Pp_2\oplus\Pp_3$, $M=\R^2(t,z)$ with
the above Poisson bracket, and the (inverse) isomorphism is given by the rule \cite{KM}:
$A_0\mapsto A_{0t}u_z-A_{0z}u_t+(A_{0tt}x+A_{0tz}y)u_y-(A_{0tz}x+A_{0zz}y)u_x-\frac16\nabla^3A_0\in\af_0$,
$A_1\mapsto A_{1t}u_y-A_{1z}u_x-\frac12\nabla^2A_1\in\af_1$,
$A_2\mapsto \nabla A_2\in\af_2$, $A_3\mapsto A_3\in\af_3$, where $\nabla=x\p_t+y\p_z$, $A_i=A_i(t,z)\in\Pp_i$.

The finite-dimensional part is $\h_0=\langle f_0',f_0''\rangle$, $\h_1=\langle f_1\rangle$, where
$f_0'=2u-xu_x-yu_y-zu_z-tu_t$, $f_0''=3u-xu_x-yu_y$, $f_1=tu_x+zu_y$
($\h_0\oplus\h_1$ is a 3-dimensional nilpotent Lie algebra).

\smallskip

(D) In this case $\af'_0\oplus\af'_1\simeq\Pp'_0\oplus\Pp'_1$, $M'=\R^2(x,y)$;
$\af''_0\oplus\af''_1\simeq\Pp''_0\oplus\Pp''_1$, $M''=\R^2(t,z)$.
The (inverse) isomorphism was computed in \cite{KM}:
$A_0'\mapsto A'_{0x}u_y-A'_{0y}u_x\in\af'_0$, $A'_1\mapsto A'_1\in\af'_1$,
$A_0''\mapsto A''_{0t}u_z-A'_{0z}u_t\in\af''_0$, $A''_1\mapsto A''_1\in\af'_1$,
where $A'_i=A'_i(x,y)\in\Pp'_i$, $A''_i=A''_i(t,z)\in\Pp''_i$.

The finite part is $\h_0=\langle f_0',f_0'',f_0'''\rangle$, where
$f_0'=xu_x-yu_y$, $f_0''=tu_t-zu_z$, $f_0'''=xu_x+yu_y-zu_z-tu_t$
($\h_0$ is an Abelian algebra).

\smallskip

(III) Here we have: $\oplus_{i=0}^2\af'_i\simeq\Pp'_0\oplus\Pp'_1\oplus\Pp'_2$, $M'=\R^2(x,y)$;
$\af''_0\simeq\Pp''_0$, $M''=\R^2(t,u_t)$. The (inverse) isomorphism is the following:
$A'_0\mapsto A'_{0y}u_x-A'_{0x}u_y+\frac12(A_{0yy}z^2-2A'_{0xy}zu_z+A'_{0xx}u_z^2)\in\af'_0$,
$A'_1\mapsto A'_{1y}z-A'_{1x}u_z\in\af'_1$, $A'_2\mapsto A'_2\in\af'_2$;
$A''_0\mapsto A''_0\in\af''_0$, where $A'_i=A'_i(x,y)\in\Pp'_i$, $A''_0=A''_0(t,u_t)\in\Pp''_0$.

The finite-dimensional part is $\h_0=\langle f_0',f_0'',f_0'''\rangle$, where
$f_0'=2u-zu_z$, $f_0''=u+yu_y$, $f_0'''=u_y$ ($\h_0$ is a 3-dimensional nilpotent Lie algebra).

\smallskip

(II) In this case $\af'_0\oplus\af'_1\simeq\Pp'_0\oplus\Pp'_1$, $M'=\R^2(y,z)$;
$\af''_0\simeq\Pp''_0$, $M''=\R^2(x,u_x)$; $\af'''_0\simeq\Pp'''_0$, $M'''=\R^2(t,u_t)$.
The (inverse) isomorphism is the following:
$A'_0\mapsto A'_{0z}u_y-A'_{0y}u_z\in\af'_0$, $A'_1\mapsto A'_1\in\af'_1$;
$A''_0\mapsto A''_0\in\af''_0$; $A'''_0\mapsto A'''_0\in\af'''_0$,
where $A'_i=A'_i(y,z)\in\Pp'_i$, $A''_0=A''_0(x,u_x)\in\Pp''_0$, $A'''_0=A'''_0(t,u_t)\in\Pp'''_0$.

The finite-dimensional part is $\h_0=\langle u-yu_y\rangle\simeq\R^1$.

\smallskip

(I) Here $\af'_0\simeq\Pp'_0$, $M'=\R^2(x,u_x)$;
$\af''_0\simeq\Pp''_0$, $M''=\R^2(y,u_y)$; $\af'''_0\simeq\Pp'''_0$, $M'''=\R^2(z,u_z)$;
$\af^{\textsl{iv}}_0\simeq\Pp^{\textsl{iv}}_0$, $M^{\textsl{iv}}=\R^2(t,u_t)$.
The (inverse) isomorphism is:
$\Pp^\a_0\ni A^\a_0\mapsto A^\a\in\af^\a_0$ ($\a$ is any number of primes).

The finite-dimensional part is $\h_0=\langle u\rangle\simeq\R^1$.
 \end{proof}

 \begin{rk}
For the linear case {\rm(O)} the infinite part $\si$ of the symmetry algebra is the Abelian space modeled on
the solution space $\op{Sol}({\rm O})$ of the equation (symmetries are shifts by solutions),
and the finite-dimensional part $\s_\diamond$
is the full conformal group $so(2,4)$\footnote{This depends on the signature; we indicate the
real form of the Lie algebra $D_3$ corresponding to the Lorenzian choice of the wave operator in (O).}
extended by the algebra $\R$ of scalings of $u$. The derived series of $\s_\diamond$ stabilizes on
$so(2,4)$. Thus this case differs from the 5 nonlinear integrable equations: $\si$ is not obtained as
the derived algebra, but on the contrast $so(2,4)$ plays the role of the Levi factor in the decomposition
$\g=\si\rtimes(so(2,4)\ltimes\R)$.
In addition the functional dimension of $\si$ is bigger than that of the other 5 cases:
2 functions of 3 variables vs 4 functions of 2 variables.
 \end{rk}

 \subsection{Lie remarkable property}\label{LRP}

Let us consider the contact Lie algebra of vector fields and prolong it to the higher jets.
Recall that a differential invariant of order $k$ is such a function $I$ on $J^k$ that it is constant
along the flow of the fields from the prolonged algebra: $L_{\hat X}I=0$ $\forall X\in\g$
($\hat X$ denotes the prolongation of the field $X$).

We have started from the list of 2nd order PDE and constructed their symmetry algebras. Now we
go in the opposite direction and compute the lowest order differential invariants of these algebras
(we'll even get the stronger statement by restricting to $\si\subset\g$).
The following claim is obtained by a straightforward computation.

 \begin{theorem}\label{Thm2}
There are no 1st order differential invariants for $\si$.

The only 2nd order differential invariant $I$ of the algebra $\si$,
up to the natural gauge $I\mapsto F(I)$, is the following:
 \begin{itemize}
\item[(O)] $I_{\rm O}=u_{11}-u_{22}-u_{33}-u_{44}$ (the functions $x^i$ are also invariant)
                                            %, but they are $\s_\diamond$-noninvariant;
 \vphantom{$\frac{A^A}{A^A}$}
\item[(N)] $I_\text{\rm N}=u_{24}-u_{13}+u_{11}u_{22}-u_{12}^2$;
 \vphantom{$\frac{A^A}{A^A}$}
\item[(D)] $I_{\rm D}=u_{14}u_{23}-u_{13}u_{24}$;
 \vphantom{$\frac{A^A}{A^A}$}
\item[(III)] $I_{\rm III}=(u_{24}-u_{14}u_{33}+u_{13}u_{34})/u_{34}$;
 \vphantom{$\frac{A^A}{A^A}$}
\item[(II)] $I_\text{\rm II}=(u_{13}u_{24}-u_{12}u_{34})/u_{14}$;
 \vphantom{$\frac{A^A}{A^A}$}
\item[(I)] $I_{\rm I}=(u_{12}u_{34}-u_{13}u_{24})/(u_{12}u_{34}-u_{14}u_{23})$.
 \vphantom{$\frac{A^A}{A^A}$}
 \end{itemize}
 \end{theorem}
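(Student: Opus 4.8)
The plan is to pass from each infinite algebra $\si$ of Theorem \ref{Thm1} to its prolongation on the jet spaces $J^1$ and $J^2$ and thereby reduce the problem to linear algebra at a generic point. Recall that a contact field $X_f$ prolongs canonically to $J^2$, its component along a second-order coordinate $u_{ij}$ being obtained from $f$ by the corresponding iterated total derivative, and that a differential invariant of order $k$ is exactly a first integral of the distribution spanned by the prolonged fields on $J^k$. Since the generators of $\si$ are given explicitly in Theorem \ref{Thm1} as functions of $x,y,z,t,u$ and the first derivatives, I would substitute them and read off $\hat X_f$ on $J^2$, then impose $\hat X_f(I)=0$.

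The observation that makes this tractable is that the top graded component acts by pure shifts whose free jets span whole coordinate directions. For instance in case (N) the functions $A_3=A_3(t,z)$ enter as the generating functions $f=A_3$ themselves, so $\hat X_{A_3}=A_3\,\p_u+A_{3,z}\p_{u_z}+A_{3,t}\p_{u_t}+A_{3,zz}\p_{u_{zz}}+A_{3,zt}\p_{u_{zt}}+A_{3,tt}\p_{u_{tt}}$; because the $2$-jet of $A_3$ at a point is arbitrary, these fields already span the six directions $\p_u,\p_{u_z},\p_{u_t},\p_{u_{zz}},\p_{u_{zt}},\p_{u_{tt}}$, forcing any invariant to be independent of those coordinates. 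I would then descend through the grading $\af_2,\af_1,\af_0$ (and the analogous chains in the other cases), each lower component eliminating further coordinates or imposing further relations, all the while using that the only constraint on each $A_i$ is to be an arbitrary function of two variables with free finite jets at the point.

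The core of the argument is a dimension count. In $J^2(\R^4)$, of dimension $4+1+4+10=19$, I would show that at a generic $2$-jet the prolonged fields of $\si$ span an $18$-dimensional subspace, so the orbits have codimension one and there is exactly one functionally independent invariant up to the gauge $I\mapsto F(I)$. The parallel count on $J^1$ (dimension $9$) yields an open orbit in the five nonlinear cases, and an orbit equal to the whole fibre over the base in case (O), which is precisely the assertion that there are no genuine first-order invariants. It then remains to identify the unique second-order invariant. Here I would use that $\si$ preserves the equation manifold, so the defining function of $\E$ is a relative invariant, $\hat X_f(\cdot)=\mu_f\,(\cdot)$: in case (D), where the level is the nonzero constant $1$, and in case (N), where the weight vanishes, this function is already an absolute invariant, whereas in (III), (II), (I) one divides it by a second relative invariant of equal weight (the denominators $u_{34}$, $u_{14}$, and $u_{12}u_{34}-u_{14}u_{23}$), the ratio being absolute. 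A direct verification that $\hat X_f$ annihilates the listed $I_\E$, together with the codimension-one count, then closes each case.

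The main obstacle I anticipate is organizational rather than conceptual: the generators depend on arbitrary functions of two variables, so $\hat X_f(I)=0$ is really an infinite family of linear conditions, and the work is to collapse it to a finite closed system. The grading is the tool for this, since prescribing the finite jet of each $A_i$ separately isolates finitely many equations; still, the bookkeeping must be repeated for all six equations, whose order-zero components $\af_0$ (with, for (N), cubic prolongation terms coming from $-\tfrac16\nabla^3A_0$) are progressively more involved. The second delicate point is the normalization, namely checking that numerator and denominator really are relative invariants of the same weight so that their quotient is an absolute invariant.
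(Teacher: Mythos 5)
Your proposal is correct and coincides in essence with the paper's own proof, which is recorded only as ``a straightforward computation'': you prolong the explicit generators of $\si$ from Theorem \ref{Thm1} to $J^1$ and $J^2$, use the grading to collapse the arbitrary-function invariance conditions to finitely many conditions on jets of the $A_i$, and conclude by the orbit-codimension count (open orbits on $J^1$, codimension one on $J^2$) together with direct verification that the listed $I$ are annihilated. The one loose point is the multiplicative-weight heuristic for why $I_{\rm D}$ is an absolute rather than merely relative invariant, which is not by itself a proof, but you correctly close it by the direct check $\hat X_f(I_\E)=0$, exactly as the paper does.
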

Of course, the equations $I=c$ for the indicated functions $I$ are invariant under the bigger symmetry group
$\g=\si\rtimes\s_\diamond$, and in the case O the additional zero-order invariants $x^i$
disappear for $\g$, leaving only the indicated (now relative) differential invariant $I$.

The constant $c$ in the invariant equations $I=c$ can be set to 0 in the cases O, N and III
(by a contact transformation preserving the equation class), and to 0 or 1
in the cases D and II (1 is the regular value\footnote{For the value zero any of the two equations
is reduced to a 1st order PDE in 2D,
 % $u_{12}u_{34}=u_{13}u_{24}$ implies $u_2=F(u_1,x^1,x^2)$
so it is easily integrable and will be excluded from further consideration.});
$c$ cannot be changed (an essential parameter) in the case I.
Thus, up to equivalence, we recover the 5 integrable PDEs from their symmetry groups.

Therefore the considered PDEs are {\em Lie remarkable\/} equations, in accordance with terminology of \cite{MOV}.
We are going to use this property to obtain new integrable equations.

 \section{Symmetric deformation of Monge-Amp\`ere equations}\label{S3}

Now we are going to deform the equations preserving some degree of symmetry and integrability.

 \subsection{Symmetric deformation}\label{SD}

The idea is the following. Consider a Lie remarkable equation $F=0$, say of the second order.
Let $\s$ be the whole symmetry algebra and $\tilde\s\subset\s$ a subalgebra, such that the algebra of
differential invariants of $\tilde\s$ has more than one generator of order $\le2$. The new $\tilde\s$-invariant
equations $\tilde F=0$ (depending on functional parameters) will be considered as deformations of $F=0$.

In particular, for our case let $\si=\tilde\s\oplus\check{\s}$ be decomposition of the infinite part
of the symmetry into two commuting ideals.
Let us look for $\tilde\s$-invariant equations. % Of course, the original equation is one of them, but
We shall demonstrate that in all our cases the integrability constraint selects a subclass of
new equations, still integrable but not contact equivalent to the original $F=0$.

 \begin{dfn}
We call the family of symmetric equations obtained by the above approach the {\em symmetric deformation\/}
of the equation $F=0$.
 \end{dfn}

We will not touch upon the linear wave equation. Among the other 5 integrable Monge-Amp\`ere equations
the second Pleba\'{n}ski equation (N) does not have a splitting of $\si$ of the required type.
This means we cannot make {\em symmetric deformation\/} per se. And indeed, the mere knowing of the
zero-grading part $\Pp\simeq\af_0\subset\g$ recovers the second Pleba\'{n}ski equation uniquely \cite{KM}
(by a sequence of Lie algebra extensions). Moreover the nilpotent part $\af_1\oplus\af_2\oplus\af_3\subset\s$
of the symmetry algebra still has only one differential invariant (the 2nd Pleba\'{n}ski equation).
From this perspective we would call the equation of type N {\em symmetric deformation rigid\/}.

For the other 4 equations the idea works nicely. Below we indicate the integrable equations
obtained by the deformation, i.e. the integrable symmetric deformations (the class of
symmetric deformations is wider); the functional
parameters entering the formulae are arbitrary.

 \subsection{Classification of symmetric deformations}\label{SD1}

Let us indicate that by {\em integrable\/} equation we mean an equation possessing a
dispersionless Lax pair (the other signs of integrability are discussed in
the next section), i.e. the two vector fields $V,W$ on $M^4$ depending on
the jets of $u$ and the additional "spectral parameter" $\lambda$ such that
the rank 2 vector distribution $\langle V,W\rangle$ is Frobenius-integrable due to the equation.

 \begin{theorem}\label{Thm3}
The following are the integrable symmetric deformations of the above Monge-Amp\`ere equations,
corresponding to the largest ideal $\tilde\s\subset\s_\infty$ (in terms of the gradings) in the symmetry algebra $\s_\infty\subset\g$:
 \begin{alignat}{2}
 \rm{(D)} & \qquad &&
u_{14}u_{23}-u_{13}u_{24}=\p_zq(z,t)\,u_4-\p_tq(z,t)\, u_3+b(z,t),\vphantom{\frac{a}{a}}
\label{dfhe} \\
 \rm{(III)} & \qquad &&
u_{24}-u_{14}u_{33}+u_{13}u_{34}=Q(t,u_4)\, u_{34}, \vphantom{\frac{a}{a}}
\label{dmhe} \\
 \rm{(II)} & \qquad &&
u_{12}u_{34}-u_{13}u_{24}=Q(t,u_4)\, u_{14}, \vphantom{\frac{a}{a}}
\label{dhhe} \\
 \rm{(I)} & \qquad &&
u_{12}u_{34}-u_{13}u_{24}=Q(t,u_4)\,(u_{12}u_{34}-u_{14}u_{23}). \vphantom{\frac{a}{a}}
\label{dghe}
  \end{alignat}
 \end{theorem}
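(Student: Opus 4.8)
The plan is to handle the four cases D, III, II, I in parallel, in three stages: first pin down the retained ideal $\tilde\s$, then compute its differential invariants up to order $2$, and finally select the integrable members by imposing a dispersionless Lax pair. For the first stage I read off $\tilde\s$ from the graded decompositions of $\si$ in Theorem~\ref{Thm1}: I write $\si=\tilde\s\oplus\check\s$ with $\check\s$ the \emph{smallest} commuting branch, so that $\tilde\s$ is as large as the grading allows. Concretely $\check\s=\af''_0\oplus\af''_1$ (the $\R^2(t,z)$ branch) in case D, while $\check\s=\af''_0$, $\af'''_0$, $\af^{\textsl{iv}}_0$ in cases III, II, I respectively. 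In each instance $\check\s$ is exactly the Poisson branch whose base carries the coordinates that will reappear as parameter variables: $(z,t)$ in case D and $(t,u_4)$ in cases III, II, I.

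For the second stage I exploit the structure of the invariants in Theorem~\ref{Thm2}. In cases III, II, I the single $\si$-invariant $I_\bullet$ is a quotient whose numerator and denominator are homogeneous of equal, nonzero weight under the branch $\check\s$ that has now been discarded; removing $\check\s$ therefore makes numerator and denominator \emph{separately} $\tilde\s$-invariant and simultaneously frees the base coordinates $(t,u_4)$ to enter as invariant functions, so the general $\tilde\s$-invariant second-order equation is $\mathrm{num}=\Phi\cdot\mathrm{den}$ with $\Phi$ an arbitrary function. In case D the invariant $I_{\rm D}$ is already polynomial; there removing $\check\s$ instead frees the coordinates $z,t$ together with the first derivatives $u_3,u_4$, so the general invariant equation is $I_{\rm D}=G(z,t,u_3,u_4)$ with $G$ arbitrary. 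I verify both claims by writing out $L_{\hat X}I=0$ for the explicit generating functions and their prolongations from the proof of Theorem~\ref{Thm1}; solving this linear first-order system shows that no further order-$\le2$ invariants occur. This reproduces the right-hand sides of \eqref{dfhe}--\eqref{dghe} with \emph{a priori} unrestricted functional coefficients.

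For the third stage I impose a dispersionless Lax pair $V,W$ — vector fields on $M^4$ depending on the $1$-jet of $u$ and on a spectral parameter $\l$ — subject to $[V,W]\in\langle V,W\rangle$ modulo the equation. Taking an ansatz for $V,W$ that is rational in $\l$ (obtained by deforming the known Lax pairs of the undeformed equations with the unknown coefficients) and expanding the Frobenius condition in powers of $\l$ yields an overdetermined system coupling the Lax coefficients to the free function of stage two. Solving it forces that function to depend only on the parameter coordinates, i.e. $G=q_z u_4-q_t u_3+b$ with $q=q(z,t),b=b(z,t)$ in case D and $\Phi=Q(t,u_4)$ in cases III, II, I, and it produces the explicit Lax pairs, establishing integrability of precisely the listed equations.

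The delicate point is this third stage. A priori the free function could depend on all of its natural arguments, and one must show that Frobenius integrability excludes every dependence except the stated one — in particular, in case D that $G$ is affine in $(u_3,u_4)$ with coefficients arising from a single potential $q(z,t)$ — with no spurious additional branches, while still exhibiting a compatible Lax pair. To keep the overdetermined system tractable I would use the residual grading: homogeneity under the retained $\tilde\s$ assigns weights to the Lax coefficients, sharply restricting the admissible $\l$-dependence and the monomials that may occur. A secondary check is that the deformed equations are genuinely new, i.e. not contact-equivalent to $F=0$ for nonconstant parameters; this follows by comparing the differential invariants of the deformed and undeformed equations.
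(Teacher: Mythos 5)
Your stages 1 and 3 match the paper: the ideals you retain are exactly the paper's choices ($\tilde\s=\af'_0\oplus\af'_1$ for D, $\af'_0\oplus\af'_1\oplus\af'_2$ for III, $\af'_0\oplus\af'_1\oplus\af''_0$ for II, three zero-graded copies for I), and the paper likewise selects the integrable members by demanding a dispersionless Lax pair and exhibits the pairs explicitly (in cases III and I the integrability conditions hold identically, so there your stage 3 reduces to writing down the Lax pair, as in the paper). The genuine gap is in your stage 2, and it matters for the \emph{completeness} half of the theorem. Your ansatz for the $\tilde\s$-invariant equations is too small in cases D and II: the full space of second-order invariants computed in the paper is $F=F(I_{\rm D},z,t,u_3,u_4,u_{33},u_{34},u_{44})$ in case D and $F=F(I_{\rm II},t,u_4,u_{44})$ in case II. This is geometrically forced: the retained branches act only through the variables $(x,y)$ (resp.\ do not touch the pure $z,t$-derivatives), so $u_{33},u_{34},u_{44}$ (resp.\ $u_{44}$) survive as invariants. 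Your ansatz $I_{\rm D}=G(z,t,u_3,u_4)$ and $I_{\rm II}=\Phi(t,u_4)$ omits these arguments, so your argument shows the listed equations \emph{are} integrable symmetric deformations but does not show they \emph{exhaust} them — yet that exhaustion is precisely where the paper does real work: the integrability conditions are what force $Q$ to be independent of $u_{33},u_{34},u_{44}$ and affine in $u_3,u_4$ with the potential structure $q_zu_4-q_tu_3+b$ in case D, and independent of $u_{44}$ in case II.

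Relatedly, your structural mechanism — that discarding $\check\s$ makes the numerator and denominator of $I_\bullet$ \emph{separately} $\tilde\s$-invariant — is false. If it were true, the numerator and denominator would appear as independent generators of the invariant algebra; but the paper's computation gives only the ratio (e.g.\ $u_{14}$ in case II is not $\tilde\s$-invariant: the branch $\af''_0$ of contact symmetries with generating functions $A''_0(x,u_x)$ rescales it, and only $I_{\rm II}$ together with $t,u_4,u_{44}$ survives). The numerator and denominator are at best relative invariants with a common weight under the \emph{retained} algebra, which again yields only the ratio. The direct verification you propose (solving $L_{\hat X}I=0$ for the explicit generators of Theorem \ref{Thm1}) would have exposed both issues, so the repair is local: replace the structural shortcut by that computation, enlarge the ansatz to the full invariant family, and let your stage-3 Frobenius analysis eliminate the extra arguments — which is exactly the paper's route.
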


 \begin{proof}
(D) Let us reduce the infinite part of the symmetry algebra $\si$ to the ideal
$\tilde\s=\af'_0\oplus\af_1'$. The differential invariants of the second order of this algebra
is $F=F(I,z,t,u_3,u_4,u_{33},u_{34},u_{44})$, where $I=I_{\rm D}$.
In other words, the general form of the symmetric deformation in question is $I=Q(z,t,u_3,u_4,u_{33},u_{34},u_{44})$.

Now the integrability condition implies that $Q$ does not depend on $u_{ij}$
and it depends linearly on $u_i$ so:
 $$Q=\p_zq(z,t)\,u_4-\p_tq(z,t)\, u_3+b(z,t).$$
The Lax pair is ($q=q(z,t)$):
 \begin{equation}\label{ourD}
 \begin{array}{c}
\,V\,=Q\,\p_x+(\l+q)\,u_{14}\,\p_z-(\l+q)\,u_{13}\,\p_t,\vphantom{\frac{A^A}{A^A_A}}\\
W=Q\,\p_y+(\l+q)\,u_{24}\,\p_z-(\l+q)\,u_{23}\,\p_t.\vphantom{\frac{A^A_A}{A^A}}
 \end{array}
 \end{equation}

\smallskip

(III) The largest ideal in $\s_\infty$ is $\tilde\s=\af_0'\oplus\af_1'\oplus\af_2'$,
and the general differential invariant of order 2
of this algebra is $F=F(I,z,u_4)$, where $I=I_{\rm III}$.
Thus the invariant equation is $I=Q$, where $Q=Q(t,u_4)$.
The integrability conditions hold identically, and the
 %(nonlinear\footnote{Remark on producing a linear Lax pair from a nonlinear...} in $\l$)
Lax pair is
 \begin{gather*}
V=-u_{34}\,\p_x+u_{14}\,\p_z+(\l+Q)\,\p_t,\\
W\,=\,-u_{33}\,\p_x+\p_y+(\l+u_{13})\,\p_z.\,
 \end{gather*}

\smallskip

(II) The minimal way to reduce the symmetry algebra is clearly the ideal
$\tilde\s=\af_0'\oplus\af_1'\oplus\af_0''$.
The general second order differential invariant for it is $F=F(I,t,u_4,u_{44})$, where $I=I_{\rm II}$.
Again the integrability constraints $F$ to be independent of $u_{44}$, so
the invariant equation is $I=Q$, where
$Q=Q(t,u_4)$. Now the integrability conditions hold
identically, and the Lax pair for this PDE is
 \begin{gather*}
\,V\,=\l\,u_{24}\,\p_x-Q\,u_{14}\,\p_y-(\l-Q)\,u_{12}\,\p_t,\\
W=\l\,u_{34}\,\p_x-Q\,u_{14}\,\p_z-(\l-Q)\,u_{13}\,\p_t.
 \end{gather*}

\smallskip

(I) Here up to re-numeration of the coordinates we can choose
$\tilde\s=\af_0'\oplus\af_0''\oplus\af_0'''$, and the invariant equation with respect to this algebra is
$I=Q$, where $I=I_{\rm I}$ and $Q=Q(t,u_4)$.
The integrability condition holds identically and the Lax pair is
 \begin{equation}
\begin{array}{lcl}
V&=&(\l+1)\,u_{24}\,\p_x-Q\,u_{14}\,\p_y-(\l+1-Q)\,u_{12}\,\p_t,
\\
W&=&\l\,u_{24}\,\p_z-(Q-1)\,u_{34}\,\p_y-(\l+1-Q)\,u_{23}\,\p_t.
\end{array}
\label{dghe_Lax_pair}
 \end{equation}
This finishes the list of most symmetric integrable deformations.
 \end{proof}

 \begin{rk}
The Lax pairs presented in the proof do not contain derivation with respect to
the spectral parameter. Integrable equations with such Lax pairs are of special importance, as they
correspond to hyper-Hermitian 4D metrics on the solutions of the equation, see \cite{DunP,C}.
  \end{rk}

We do not investigate the most general ansatz for a subalgebra $\tilde\s\subset\s$, but
work in the paradigm described at the beginning of this section.
For instance, in the case N, if we choose $\tilde\s=\af_3$, the invariant function is
$F=F(x,y,z,t,u_1,u_2,u_{11},u_{12},u_{13},u_{14},u_{22},u_{23},u_{24}),$
and this is too general ansatz to search for integrable cases.
 % The situation with $\tilde\s=\af_2\oplus\af_3\subset\s$ is not much better.

Similarly for the case D the invariant equations for $\tilde\s=\af_1'$ are too big for
classification, but they do contain integrable branches, not covered above. For instance,
the dispersionless PDE
 $$u_{14}u_{23}-u_{13}u_{24}=Q, \quad Q=a(t)\,(c\, u_2+q(y))$$
is integrable with the following Lax pair (below $D_z,D_t$ denote the total derivatives):
 \begin{gather*}
V=-\l\,\p_x+u_{14}\,\p_z-u_{13}\,\p_t+\frac{u_{14}\,D_zQ-u_{13}\,D_tQ}Q\,\l\,\p_\l,\\
W=-\l\,\p_y+u_{24}\,\p_z-u_{23}\,\p_t+\frac{u_{24}\,D_zQ-u_{23}\,D_tQ}Q\,\l\,\p_\l.
 \end{gather*}
(notice that if the constant $c\neq0$ in the expression for $Q$,
then we can be normalize $c=1$, $q(y)=0$).

Thus we covered the possible symmetric deformations provided by our method in the cases I, D, N,
but the cases III and II have other branches (than those of Theorem \ref{Thm3}).

 \begin{prop}\label{Thm3+}
The following are the integrable symmetric deformations of the above Monge-Amp\`ere equations
in the cases {\rm III} and {\rm II} for the indicated choice of the symmetry ideal:
 %(modulo the natural equivalence):
       \vskip5pt
\noindent{\rm(III)} Choice $\tilde\s=\af_0''$. Then
$u_{24}-u_{14}u_{33}+u_{13}u_{34}=Q\cdot u_{34}$, where
 \begin{multline*}
Q=
\frac1{2\Delta}\Bigl((F_x-FF_y)u_3^2+2(G_x-GF_y-zF_y)u_3+H\Bigr)+\Delta\cdot L-\frac1\Delta(Fu_2+u_1)
 \end{multline*}
with $\Delta=Fu_3+G+z$, and $F,G,H,L$ are functions of $(x,y)$ satisfying
 $$
L_x+(FL)_y=\tfrac12F_{yy}.
 $$
The singular branches are given by the above formula for the equation with
$Q=\dfrac{u_2}{u_3}-k(x)\,u_3$, $Q=z\cdot\p_y f(x,y)-\p_x f(x,y)\cdot u_3$,
$Q=z\,f(y)$.

\smallskip

\noindent{\rm(II)} Choice $\tilde\s=\af_0''\oplus\af_0'''$.
Then $u_{12}u_{34}-u_{13}u_{24}=Q\cdot u_{14}$, where
$Q=\p_y f(y,z)\cdot u_3-\p_z f(y,z)\cdot u_2+h(y,z)$.
 \end{prop}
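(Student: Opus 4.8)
The plan is to run, for these two smaller ideals $\tilde\s$, exactly the scheme used in the proof of Theorem~\ref{Thm3}. For each case I would (i) compute the algebra of differential invariants of $\tilde\s$ up to order two by prolonging $\tilde\s$ to $J^2$ and solving $L_{\hat X}I=0$; (ii) write the most general invariant second-order equation in deformation form $I=Q$, with $Q$ an a priori arbitrary function of the lower invariants; and (iii) impose integrability in the sense adopted in the paper (existence of a dispersionless Lax pair) to carve out the admissible $Q$. Since $\tilde\s$ is now much smaller than the ideal used in Theorem~\ref{Thm3}, there are many more invariants and the functional freedom in $Q$ is correspondingly larger.

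For case~(III) the ideal $\tilde\s=\af_0''$ consists of the contact fields $X_A$ with $A=A(t,u_4)$. Each such $X_A$ has nonzero components only along $\p_t,\p_u,\p_{u_4}$, and choosing $A=1,u_4,t$ shows that varying $A$ makes the action transitive in these three directions; hence the functions $x,y,z,u_1,u_2,u_3$ form a complete set of invariants of order $\le1$, and a short computation yields the order-two generators, among them $I_{\rm III}$. The general symmetric deformation is therefore $u_{24}-u_{14}u_{33}+u_{13}u_{34}=Q\,u_{34}$ with $Q$ a function of these invariants. To impose integrability I would take a Lax pair $V,W$ of the shape used in Theorem~\ref{Thm3}, affine in $\l$ with coefficients built from the first jets and $Q$ (allowing a $\p_\l$-term as in the case-D example given above), and require $[V,W]\in\langle V,W\rangle$ modulo the equation and its differential consequences. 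The resulting overdetermined system first forces $Q$ to be independent of the second derivatives, i.e. $Q=Q(x,y,z,u_1,u_2,u_3)$; integrating what remains yields the stated rational expression in $\Delta=Fu_3+G+z$ together with the compatibility $L_x+(FL)_y=\tfrac12F_{yy}$, while the degenerate solution families of the same system produce the three listed singular branches.

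For case~(II) the ideal $\tilde\s=\af_0''\oplus\af_0'''$ is generated by $A(x,u_1)+B(t,u_4)$, whose fields move only $x,t,u,u_1,u_4$; thus $y,z,u_2,u_3$ are invariant of order $\le1$, and one again lists the order-two invariants containing $I_{\rm II}$. Writing the deformation as $u_{12}u_{34}-u_{13}u_{24}=Q\,u_{14}$ and repeating the Lax-pair analysis restricts $Q$ to be independent of the second derivatives and affine in $u_2,u_3$, with the Hamiltonian-type coefficients $\p_yf,-\p_zf$ and free term $h$ for functions $f,h$ of $(y,z)$. In both cases I would close the argument by exhibiting the explicit Lax pair for each family, giving the converse implication and completing the classification (equivalently, one may verify integrability through self-duality of the conformal structure discussed in Section~\ref{S4}).

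The main obstacle is step~(iii): setting up the Lax-pair ansatz with enough generality and then extracting and integrating the overdetermined PDE system on $Q$. In case~(III) the delicate points are obtaining the precise dependence on $\Delta$, deriving the linear constraint $L_x+(FL)_y=\tfrac12F_{yy}$ on $(F,L)$, and enumerating the singular branches where the generic formula degenerates; these demand a careful case split rather than a single computation. Case~(II) is analogous but lighter, the affine-in-$(u_2,u_3)$ form emerging directly from the integrability conditions.
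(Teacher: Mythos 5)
Your proposal follows essentially the same route as the paper's proof: compute the low-order differential invariants of the reduced ideal (giving $Q=Q(x,y,z,u_1,u_2,u_3)$ in case III and $Q=Q(y,z,u_2,u_3)$ in case II), write the deformation as $I=Q$, and impose existence of a dispersionless Lax pair --- correctly anticipating the need for $\l\,\p_\l$-terms in the ansatz, which the paper's explicit Lax pairs for both cases indeed contain --- with the overdetermined system on $Q$ then branching into the generic family with the constraint $L_x+(FL)_y=\tfrac12F_{yy}$ plus the singular branches (normalized by point transformations), exactly as in the paper. The only parts you defer, namely integrating the integrability conditions and enumerating the degenerate branches, are precisely the computations the paper itself reports without detail, so there is no substantive gap.
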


 \begin{proof}
(III) The indicated possibility for III $\tilde\s=\af_0''$
is the other choice in addition to that of Theorem \ref{Thm3}.
Then an invariant deformation
is $F=F(I,x,y,z,u_1,u_2,u_3)$ (the second order terms disappear as in the case D of Theorem \ref{Thm3}),
where $I=I_{\rm III}$.  In other words,
the invariant equation is again $I=Q$ with $Q=Q(x,y,z,u_1,u_2,u_3)$.

Here the integrability conditions do not hold identically, imposing a system of equations
on $Q$. The solution branches into the general case indicated above (with an ODE constraint)
and the following three cases (modulo the natural equivalence):
 $$
Q=\frac{u_2+z\cdot\p_yf(x,y)+h(x,y)}{u_3+f(x,y)}-\p_xf(x,y)-k(x)\,(u_3+f(x,y)),
 $$
and we can achieve $f=h=0$ by a point transformation (change of coordinates in $J^0=\R^5(x,y,z,t,u)$);
 $$
Q=z\cdot\p_y f(x,y)-(\p_x f(x,y)+k(x))\cdot u_3+h(x,y),
 $$
 and we can achieve $k=h=0$ in $I=Q$ by a point transformation;
 $$
Q=z\,f(y)+h(x,y),
 $$
and we can achieve $h=0$ in $I=Q$ by a point transformation.
These are precisely the subcases 2-4 in case III of the theorem.

The Lax pair is given for all subcases by the formula
 \begin{gather*}
V=-u_{34}\,\p_x+u_{14}\,\p_z+\l\,\p_t+(u_{24}Q_{u_2}+u_{34}Q_{u_3})\,\l\,\p_\l,\\
W=-u_{33}\,\p_x+\p_y+(\l+u_{13}-Q)\,\p_z+(Q_z+u_{23}Q_{u_2}+u_{33}Q_{u_3})\,\l\,\p_\l.
 \end{gather*}

\smallskip

(II) In case II we have two essentially different possibilities to
reduce the symmetry algebra: one considered in Theorem \ref{Thm3} and the other
$\tilde\s=\af_0''\oplus\af_0'''$ (the choice
$\tilde\s=\af_0'\oplus\af_1'\oplus\af_0'''$ is equivalent to the first one).
In this case we have to study the equation $I=Q$, where $Q=Q(y,z,u_2,u_3)$.
The integrability condition imposes the constraint, which resolves so:
$Q=\p_y f(y,z)\cdot u_3-\p_z f(y,z)\cdot u_2+h(y,z)$.
The Lax pair
 \begin{gather*}
\,V\,=\frac{\l}{\l-1}\,u_{34}\,\p_x-\frac{u_{14}}{\l-1}\,\p_z-u_{13}\,\p_t
+\p_z f(y,z)\cdot u_{14}\,\l\,\p_\l,\\
W=\frac{\l}{\l-1}\,u_{24}\,\p_x-\frac{u_{14}}{\l-1}\,\p_y-u_{12}\,\p_t
+\p_y f(y,z)\cdot u_{14}\,\l\,\p_\l.
 \end{gather*}
for the deformed PDE $I=Q$ yields the integrability.
 \end{proof}

%
% \begin{rk}
% Lax pairs of the integrable equations of this theorem contain derivations by the spectral
% parameter that cannot be removed. Indeed, the hyper-Hermitian property is equivalent to
% existence of a local (to the correspondence 5-space $M^5(x,y,z,t,\lambda)$) 1st integral,
% or equivalently a twistor function, which is not the case here.
% \end{rk}

  \subsection{Two remarks about normal forms of deformations}

Notice that the symmetry group of non-deformed PDEs, as described in Section \ref{SAE},
still acts on deformed PDEs and so can be considered as a natural equivalence group
(actually this is the quotient of the full symmetry group by its normal subgroup that is the symmetry
group of the deformed equation).
Consequently not all equations considered in Theorem \ref{Thm3} are different, and some are contact equivalent.

We already made such identification in Proposition \ref{Thm3+} for the last three sub-cases
of type III, by noticing (see the proof) that some functional parameters can be eliminated.
Doing so in all cases (i.e. reducing to normal forms all of which are non-equivalent) is
cumbersome, as the number of cases becomes too large.

Let us make two remarks illustrating the nature of this problem.

The first is that there can be a situation with no local invariants, yet with non-trivial global
invariants. Consider, for example, the second sub-case of type II (the one from Proposition \ref{Thm3+}):
 $$
u_{12}u_{34}-u_{13}u_{24}=(\p_y f(y,z)\cdot u_3-\p_z f(y,z)\cdot u_2+h(y,z))\, u_{14}.
 $$
A straightforward check shows
that change $u\mapsto u+g(y,z)$ results in the change $h\mapsto h+\{f,g\}$, where
 $$\{f,g\}=f_yg_z-f_zg_y$$
is the standard Poisson bracket on the plane $\R^2(y,z)$.
Thus locally we can eliminate the function $h=h(y,z)$ by a point transformation.
However globally there is an obstruction to doing so. For instance, if the manifold $M^4(x,y,z,t)$
fibers over a compact surface $\Sigma^2(y,z)$, then $\int h\,dy\wedge dz$ is an invariant
of the indicated gauge.

The second remark is that some known equivalence problems arise in classification.
Consider, for example, type I when the infinite part of the symmetry pseudo-group is the direct
sum of 4 copies of the group of area preserving transformations $\op{SDiff}(2)$,
see Theorem \ref{Thm1}. The group of area preserving transformations of $\R^2(t,u_4)$
preserves the class of deformations of type I from Theorem \ref{Thm3}, but changes $Q$
by pullback.

Thus we obtain the classical problem of normal forms of functions with
respect to area preserving transformations on the plane, locally near the origin $0$.
If $Q=Q(t,u_4)$ is non-degenerate at $0$, it can be transformed to the function $Q=t$.
If the origin is a nondegenerate stable equilibrium, the Birkhoff normal form
(in the simplest case of 1 degree of freedom resonances are absent)
allows to choose canonical coordinates $(t,u_4)$ with $Q=Q(\rho)$, where $\rho^2=t^2+u_4^2$;
for instance we get the family $Q=\rho^m$.
In the case of degenerate isolated critical point or non-isolated critical point the normal forms
are too complicated (there are uncountably many non-equivalent such); for instance
we get the family $Q=t^m$. In the ultimate
degenerate case $Q=\op{const}$.
Thus even the local problem yields more than continuum of non-equivalent integrable equations
(deformations of the general heavenly equation)
 $$
(u_{12}u_{34}-u_{13}u_{24})/(u_{12}u_{34}-u_{14}u_{23})=Q(t,u_4).
 $$
In the global case, the problem is reduced to  classification of Hamiltonian systems with
1 degree of freedom. In the case the Hamiltonian is a Morse function, the problem was solved in
\cite{Kr}. In the general case the solution is unknown.

 \section{Geometry and Integrability}\label{S4}

  \subsection{Integrability via geometry of the solution spaces}\label{SIGS}

Integrability via a Lax pair is closely related to integrability by the method of hydrodynamic reductions.

In \cite{FK} the following conjecture for dispersionless systems was formulated and motivated:

 \smallskip

  {\bf Criterion.}
{\it A 2nd order PDE $F=0$ with 4 independent variables on one scalar function
$u=u(x^1,...,x^4)$, with nondegenerate symbol, and of dispersionless type
is integrable by the method of hydrodynamic reductions iff the conformal structure
associated with the metric $g_{ij}dx^idx^j$, where the matrix $(g_{ij})$ is inverse to
 $(\p^2_{i,j}F)$, is (anti-)self-dual (depending on the choice of orientation)
 on every solution of the system $F=0$. }

 \smallskip

Though not proved in full generality, the above criterion can be used as a test for finding
integrable equations, as the self-duality property is contact invariant and straightforward to
check.

 \begin{theorem}
The symmetric deformations considered in Section \ref{SD} are integrable
by the Criterion iff they are integrable via a Lax pair.
 \end{theorem}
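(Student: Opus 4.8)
The plan is to route both implications through the twistor correspondence between dispersionless Lax pairs and (anti-)self-dual conformal structures. For each deformed equation $\E$ from Theorem \ref{Thm3} and Proposition \ref{Thm3+} I first form the symbol $g^{ij}=\p F/\p u_{ij}$, invert it to the conformal metric $g=g_{ij}\,dx^idx^j$ of the Criterion, and fix the orientation on $M^4$ for which the Lax $2$-planes below are self-dual. The whole argument then rests on the observation that the two vector fields $V(\l),W(\l)$ of each exhibited Lax pair span, for every value of the spectral parameter, a totally null self-dual $2$-plane (an $\a$-plane) for $g$.

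For the implication ``Lax pair $\Rightarrow$ Criterion'' I would check directly that $g(V,V)=g(V,W)=g(W,W)=0$ on solutions and that $\langle V,W\rangle$ is self-dual; this is a finite verification using the explicit $g_{ij}$ and the explicit $V,W$ written down in the proofs of Theorem \ref{Thm3} and Proposition \ref{Thm3+}. The Frobenius integrability of $\langle V,W\rangle$ modulo $\E$ -- which is exactly the Lax condition -- then says these null self-dual planes integrate to $\a$-surfaces, and as $\l$ ranges over $\C P^1$ one gets the full family of integrable $\a$-planes through each point. By the Penrose nonlinear graviton construction (see \cite{DunP,C}) this is equivalent to (anti-)self-duality of $g$ on the given solution, i.e. the Criterion holds. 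The cases whose Lax pairs carry no $\p_\l$ term furnish in addition a compatible complex structure, which is the hyper-Hermitian refinement already noted after Theorem \ref{Thm3}.

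For the converse ``Criterion $\Rightarrow$ Lax pair'' I would start from the general $\tilde\s$-invariant ansatz $I=Q$ used in those same proofs, compute the anti-self-dual part $C^-$ of the Weyl tensor of $g$ on shell, and impose $C^-=0$. The resulting overdetermined system on the functional parameters ($Q$, respectively $q,b$ and $F,G,H,L$) is to be shown to coincide with the integrability constraints already solved earlier: that $Q$ be independent of the second jets and affine in the first jets in case D, that the constraints hold identically in cases III, II, I of Theorem \ref{Thm3}, and that one recovers the same branching together with the ODE $L_x+(FL)_y=\tfrac12F_{yy}$ in Proposition \ref{Thm3+}. Since a Lax pair was exhibited precisely for the equations meeting these constraints, self-duality and Lax integrability cut out the same subclass, which is the assertion.

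The main obstacle is this converse: computing $C^-$ for a metric whose coefficients depend on the $2$-jet of $u$ and reducing it on the solution manifold of $\E$ is a genuine curvature calculation, and arranging its vanishing so that it visibly reproduces -- rather than merely implies -- the earlier Lax constraints is the delicate point. A lighter route I would pursue in parallel is to make the Penrose correspondence effective: self-duality is equivalent to the existence through each point of a $\C P^1$-family of integrable totally null self-dual $2$-planes, so it suffices to show that this $\a$-plane distribution can always be presented in the rational-in-$\l$ form of the exhibited $V,W$. Matching the two then reduces to the same overdetermined system on the parameters, but read off from the first-order data of the distribution rather than from the full curvature tensor.
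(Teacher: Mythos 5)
Your proposal is correct and takes essentially the same route as the paper: the paper's proof likewise consists of a direct computation showing that the self-duality (Criterion) constraints on the functional parameters coincide with the constraints for existence of a Lax pair, together with exactly your twistor observation --- the Lax planes $\langle V,W\rangle$ are the totally null $2$-planes parametrized by the spectral parameter $\lambda$, whose Frobenius integrability produces the $3$-parameter family of null surfaces characterizing self-duality. The ``delicate point'' you flag in the converse is precisely what the paper settles by brute-force verification that the two overdetermined systems on $Q$ (resp.\ $q,b$ and $F,G,H,L$) agree, so no further idea is required.
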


  \begin{proof}
This follows by direct computations showing that on the function $Q$ from the proof
of Theorem \ref{Thm3} integrability condition imposes the same constraints as
the condition of existence of a Lax pair. Let us notice here
that these Lax pairs are of the form motivated by the twistor theory, namely the
2-planes $\langle V,W\rangle$ are totally null with respect to the metric $g$ as in the Criterion
(the space of such null 2-planes at every point is 1-dimensional and is parametrized by the
''spectral parameter'' $\l$). Consequently the integral surfaces form a 3-parameter family of null surfaces,
and this characterizes self-duality.
 % Conversely, the 3-parametric family of the planes
 % uniquely recovers the conformal structure $[g]$ and the Lax pair
  \end{proof}

  \begin{rk}
The criterion can be also used to verify non-integrability of certain PDE.
Consider, for instance, Khokhlov-Zabolotskaya equation
 $$
u_{14}-u\,u_{11}-u_1^2-u_{22}+u_{33}=0.
 $$
This equation is the 4D analog of the 3D integrable equation, known as dispersionless
Kadomtsev-Petviashvilli.
 %(dKP; in some sources it is also called 3D Khokhlov-Zabolotskaya equation).
Notice that the 3D equation is integrable by all known methods (Lax pair, hydrodynamic
integrability, Einstein-Weyl geometry on any solution etc).

It is known that the Khokhlov-Zabolotskaya equation does not pass the hydrodynamic integrability test
(E.Ferapontov: private communication). We can aply the above Criterion,
and it also fails on this equation, signifying its non-integrability.

Notice that the infinite part of the contact symmetry pseudogroup of this equation
is parametrized by 3 functions of 1 argument \cite{Ly}. This is in contrast with the
Monge-Amp\`ere equations considered in this paper, for which integrability
was manifested by the symmetry pseudogroup with the infinite part parametrized by
functions of 2 arguments.
  \end{rk}

 \subsection{Petrov types}

Let us notice that the symmetry algebras of Monge-Amp\`ere equations
considered in Section \ref{SAE} are equally big (for all cases except (O) they are
parametrized by 4 copies of $\op{SDiff}(2)\simeq\Pp/\R$), when considered in full generality.
When restricted to subalgebras of $\op{Sp}(8)$ (which is a natural equivalence group
for Hirota type equations) they are Lie groups of different dimensions:
it was computed in \cite{DF} that
 \begin{gather*}
\dim(\mathfrak{sym}({\rm O}))=16,\ \dim(\mathfrak{sym}({\rm D}))=\dim(\mathfrak{sym}({\rm III}))=13,\\
\dim(\mathfrak{sym}({\rm N}))=14,\ \ \dim(\mathfrak{sym}({\rm II}))=\dim(\mathfrak{sym}({\rm I}))=12, \ {}
 \end{gather*}
where $\mathfrak{sym}(\E)=\mathfrak{cont}(\E)\cap\op{Sp}(8)$ denotes the linear symmetries of a PDE $\E$
(it was noted in \cite{DF} that the six type of Monge-Amp\`ere equations are not
$Sp(8)$-equivalent; basing on the computation in Theorem \ref{Thm1} we conclude that
they are not even contact equivalent; however they can be equivalent by a B\"acklund-type
transformation, and indeed such exists between the first and the second Pleba\'{n}ski equations \cite{Pb}).

The above decrease of dimensions is in accordance with the Penrose incidence diagram
of Petrov types:

 \begin{center}
 \begin{picture}(170,50)
\put(0,20){\large I}
\put(9,25){\vector(1,0){25}}
\put(39,20){\large II}
\put(53,27){\vector(2,1){25}}
\put(53,23){\vector(2,-1){25}}
\put(80,35){\large III}
\put(80,5){\large \,D}
\put(98,39){\vector(2,-1){25}}
\put(98,11){\vector(2,1){25}}
\put(124,20){\large \,N}
\put(140,25){\vector(1,0){25}}
\put(166,20){\large \,O}
 \end{picture}
  \end{center}

The notation for the 6 types of Monge-Amp\`ere equations follows the well-known
stratification of $3\times3$ complex matrices considered modulo similarity
(equivalently: classification of pencils of conics). Four points of intersections of two
distinct conics (counted with multiplicity) form six possible configurations:
the generic case I, where the four points are distinct;
the four degenerations II, III, D, N, when two, three, two disjoint pairs, or
four points come together;
and the trivial case O, when the pencil of conics is constant.

Interpreting the summands $\Pp$ in the symmetry algebra $\g$ of the Monge-Amp\`ere equation
as a point and depth of grading as multiplicity, we obtain from Theorem \ref{Thm1} the
interpretation for the names with the same degeneration scheme: the type $N$ corresponds to
one grading of depth 4 (grades from 0 to 3), \dots, type $I$ corresponds to 4 commuting
(non-graded) pieces of $\Pp$.

 \begin{rk}
The classification of matrices discussed above is applied to classify Weyl tensors of 4-dimensional
conformal structures in Lorenzian signature (Petrov types). We naturally associate conformal
structures to any solution of non-degenerate PDE in 4D (see Criterion in Section \ref{SIGS}).
However the corresponding Petrov type is not fixed as solutions of the same equation vary.
For the second Pleba\'{n}ski equation this can be seen from computations of the Weyl tensor in \cite{Dun}.
 \end{rk}

Relation of gauge pseudogroups to Petrov types was already noticed in \cite{C}, but
our interpretation is apparently more related to the study of singular varieties in \cite{DF}.
From the proof of Theorem\ref{Thm1} we notice that the
Hamiltonians parametrizing the 4 copies of $\op{SDiff}(2)$
are partitioned into sets of $k_1,\dots,k_r$ functions, $k_1+\dots+k_r=4$,
such that any set of $k_i$ functions are defined on the same surface $M^2_i$
(giving the Poisson algebra $\Pp=\op{SDiff}(M_i^2)+\R$).

The origin of these surfaces comes from the geometry described by
Doubrov and Ferapontov. Integrable Monge-Amp\`ere equation in their
interpretation corresponds to a hyperplane tangential to the Pl\"ucker embedding
of the Lagrangian Grassmannian $\Lambda^{10}$ along a subvariety $X^4$ which meets
all $\Lambda^6 \subset\Lambda^{10}$ corresponding to
congruence of Lagrangian subspaces in a symplectic space $V^8$.

Now there is a symplectic splitting $V^8=\oplus V^{2k_i}_i$ and the singular variety $X^4$
consists of all Lagrangian subspaces which intersect $V_i^{2k_i}$ by a Lagrangian subspace
(of dimension $k_i$; in fact there are some more conditions: if $k_i>1$, then
intersection of $X^4$ with $V^{2k_i}_i$ shall meet a fixed isotropic 2-plane $\ell^2$ by a line;
if $k_i>2$, then it shall meet a fixed co-isotropic 4-plane $L^4\supset\ell^2$ by a 3-plane;
the type O corresponds to non-split $V^8$, like type N, but goes a bit different pattern, see \cite{DF}).

The infinite symmetry block $\Pp_i$ consists of Hamiltonians on $V^{2k_i}_i$ and if $k_i>1$
this higher-dimensional symplectic space is constructed from the 2-dimensional space $M^2_i$ by
a natural (though complicated) procedure, see \cite{KM}, which in particular implies that
these parts of symmetries are point transformations; on the contrary for $k_i=1$ the
contribution is general contact, see formulae in the proof of Theorem \ref{Thm1}.

Finally the partitions of 4 into sum of $k_i$ encode the Petrov types:
 \begin{gather*}
(1,1,1,1)\ \thicksim\text{ type I },\quad
(1,1,2)\ \thicksim\text{ type II},\\
(1,3)\ \thicksim\text{ type III},\quad
(2,2)\ \thicksim\text{ type D},\quad
(4)\ \thicksim\text{ type N}
 \end{gather*}
(type O also correspond to $(4)$, but has some degeneracies leading to a bigger pseudogroup
of symmetries).

This finishes explanation of the numerology of equations by Petrov types. The incidence diagram
reflects the degenerations between the equations, e.g. all other integrable Monge-Amp\`ere equations
of Hirota type can be obtained from the general heavenly equation I by some limiting process
and this can be seen from their symmetry algebras.

  \subsection{Recursion operators}\label{rec}

Let us indicate recursion operators for those symmetric deformation that have
hyper-Hermitian property (Theorem~\ref{Thm3}).
In accordance with \cite{KrasilshchikKersten1995}, recursion operators are B\"acklund auto-trans\-formations
for the tangent covering of the equation under study.
To construct recursion operators we use the technique of
\cite{Sergyeyev2015}, see also \cite{MalykhNutkuSheftel2004,MarvanSergyeyev2012,MorozovSergyeyev2014}.
We provide details of computations for the deformation
(\ref{dghe}) of the general heavenly equation.
Its linearization is the restriction of equation
 \begin{gather}
Q\,(u_{23} D^2_{x,t}\varphi + u_{14} D^2_{y,z}\varphi)\notag
-(Q-1)\,(u_{34} D^2_{x,y} \varphi+ u_{12} D^2_{z,t}\varphi)\quad\\
{}-u_{24} D^2_{x,z}\varphi - u_{13} D^2_{y,t}\varphi
-Q_{u_4} (u_{12}\,u_{34} - u_{14}\,u_{23})D_t \varphi=0
\label{linearized_dghe}
 \end{gather}
on (\ref{dghe}). A function $\varphi=\varphi(t,x,y,z,u,u_1,u_2,u_2,u_4)$ subject to (\ref{linearized_dghe})
is called a {\it generator of infinitesimal contact symmetry} of equation (\ref{dghe}).

The covering with `spectral parameter` $\lambda = \mathrm{const}$, $\lambda \not \in \{0,-1\}$
corresponding to the vector fields (\ref{dghe_Lax_pair}) is defined by the system
 \begin{equation}
\left\{\begin{array}{lcl}
r_x &=& \displaystyle{
\frac{Q}{\lambda+1}\,\frac{u_{14}}{u_{24}}\,r_y+\Bigl(1-\frac{Q}{\lambda+1}\Bigr)\,\frac{u_{12}}{u_{24}}\,r_t,
\phantom{\frac{\frac{1}{1}}{\frac{1}{1}}}} \\
r_z &=& \displaystyle{
\frac{Q-1}{\lambda}\,\frac{u_{34}}{u_{24}}\,r_y+\Bigl(1-\frac{Q-1}{\lambda}\Bigr)\,\frac{u_{23}}{u_{24}}\,r_t,
\phantom{\frac{\frac{1}{1}}{\frac{1}{1}}}}
\end{array}\right.
\label{dghe_covering}
 \end{equation}
where $r$ is the coordinate on the fibre of the covering.
A function $\varphi=\varphi(t,x,y,z,u,u_t,u_x,u_y,u_z,r,r_y,r_z,r_{yy},r_{yz},r_{zz},....)$ subject to (\ref{linearized_dghe}) and (\ref{dghe_covering})  is called \cite{KrasilshchikVinogradov1989,Bocharov1999}
a {\it shadow of nonlocal symmetry} of equation (\ref{dghe}).

Let us make the following observation: the function $\varphi=r$ is a shadow of equation (\ref{dghe})
in the covering (\ref{dghe_covering}).
In other words, every solution $r$ to (\ref{dghe_covering}) is a solution to (\ref{linearized_dghe}) as well.

Suppose a solution $r$ to (\ref{dghe_covering}) is a sum of a Laurent series in $\lambda$:
 \begin{equation}
r = \sum \limits_{n=-\infty}^{\infty} \lambda^n \, r_n.
\label{q_expansion}
 \end{equation}
Since (\ref{linearized_dghe}) is independent on $\lambda$, each $r_n$ is a solution to (\ref{linearized_dghe}).  Substituting for (\ref{q_expansion}) into
(\ref{dghe_covering}) yields
\[
\left\{\begin{array}{lcl}
r_{n,x} &=& \displaystyle{
Q\,\frac{u_{14}}{u_{24}}\,r_{n+1,y}-r_{n+1,x}
-(Q-1)\,\frac{u_{12}}{u_{24}}\,r_{n+1,t}+\frac{u_{12}}{u_{24}}\,r_{n,t},
}\\
r_{n,z} &=& \displaystyle{
(Q-1)\,\frac{u_{34}}{u_{24}}\,r_{n+1,y}-(Q-1)\,\frac{u_{23}}{u_{24}}\,r_{n+1,t}
+\frac{u_{23}}{u_{24}}\,r_{n,t}
}
\phantom{\frac{\frac{1}{1}}{\frac{1}{1}}}
\end{array}
\right.
\]
for each $n$. Now fix $n$ and denote $r_n = \psi$ and $r_{n+1} = \varphi$.
The compatibility conditions of the resulting over-determined system
 \begin{equation}\label{Rho_dghe}
\left\{\begin{array}{lcl}
\psi_1 &=& \displaystyle{Q\,\frac{u_{14}}{u_{24}}\,\varphi_2-\varphi_1
-(Q-1)\,\frac{u_{12}}{u_{24}}\,\varphi_4+\frac{u_{12}}{u_{24}}\,\psi_4,}\\
\psi_3 &=& \displaystyle{(Q-1)\,\frac{u_{34}}{u_{24}}\,\varphi_2-(Q-1)\,\frac{u_{23}}{u_{24}}\,\varphi_4
+\frac{u_{23}}{u_{24}}\,\psi_4}
\phantom{\frac{\frac{1}{1}}{\frac{1}{1}}}
\end{array}\right.
 \end{equation}
%  $$
% \left\{\begin{array}{lcl}
% \psi_x &=& \displaystyle{Q\,\frac{u_{14}}{u_{24}}\,\varphi_y-\varphi_x
% -(Q-1)\,\frac{u_{12}}{u_{24}}\,\varphi_t+\frac{u_{12}}{u_{24}}\,\psi_t,}\\
% \psi_z &=& \displaystyle{(Q-1)\,\frac{u_{34}}{u_{24}}\,\varphi_y-(Q-1)\,\frac{u_{23}}{u_{24}}\,\varphi_t
% +\frac{u_{23}}{u_{24}}\,\psi_t}
% \phantom{\frac{\frac{1}{1}}{\frac{1}{1}}}}
% \end{array}\right.
%  $$
coincide with equations (\ref{dghe}) and (\ref{linearized_dghe}). Since $\varphi$ and $\psi$ are solutions to (\ref{linearized_dghe}), we conclude that system (\ref{Rho_dghe}) defines a recursion operator
for symmetries of our deformed equation (\ref{dghe}).

The same technique gives recursion operators for symmetries of the other equations from Theorem \ref{Thm3}.
Namely we obtain the following.

Equation (\ref{dfhe}) has a recursion operator defined by the system
 \[
\left\{\begin{array}{lcl}
\psi_1 &=&\displaystyle{\frac1{Q}\,\Bigl(
u_{13}\,(q\,\psi_4+\varphi_4)-u_{14}\,(q\,\psi_3+\varphi_3)\Bigr),
\phantom{\frac{\frac11}{\frac11}}}\\
\psi_2 &=& \displaystyle{\frac1{Q}\,\Bigl(
u_{23}\,(q\,\psi_4 + \varphi_4) - u_{24}\,(q\,\psi_3+\varphi_3)\Bigr).
\phantom{\frac{\frac11}{\frac11}}}
\end{array}\right.
 \]

Equation (\ref{dmhe}) has a recursion operator defined by the system
 \[
\left\{\begin{array}{lcl}
\psi_3 &=&
u_{33}\,\varphi_1 - \varphi_2 - u_{13}\,\varphi_3,
\phantom{\frac11}\\
\psi_4 &=&
u_{34} \varphi_1 - u_{14}\,\varphi_3 - Q\,\varphi_4.
\phantom{\frac{1^2}{1^2}}
\end{array}\right.
 \]

Equation (\ref{dhhe}) has a recursion operator defined by the system
 \[
\left\{\begin{array}{lcl}
\psi_2 &=&\displaystyle{\frac1{Q\,u_{14}}\,\Bigl(
Q\,u_{12}\,\psi_4+u_{24}\,\varphi_1-u_{12}\,\varphi_4\Bigr),
\phantom{\frac{\frac11}{\frac11}}}
\\
\psi_3 &=& \displaystyle{\frac1{Q\,u_{14}}\,\Bigl(
Q\,u_{13}\,\psi_4 + u_{34}\,\varphi_1-u_{13}\,\varphi_4\Bigr).
\phantom{\frac{\frac11}{\frac11}}}
\end{array}\right.
 \]

 \section{Conclusion}

The abundance of symmetries for dispersionless integrable PDEs considered in this paper is not
the general rule for such equations, but it can be expected for some of them from their geometric nature:
all such integrable equations are reductions of the self-duality equations in 4D \cite{Penrose,MW},
which have lots of symmetry.

Not much is known however on relation of large symmetry and integrability. It is known by folklore
that sufficiently large symmetry pseudogroup (in our case 2 functions of 3 arguments) implies linearizability
of the equation (see \cite[Th. 6.46]{Olver} for one possible formulation).
We expect that existence of several functions of 2 arguments in the symmetry pseudogroup must bear an implication
for integrability. We have not observed more than 4 copies of the algebra $\op{SDiff}(2)$
in the symmetry of non-degenerate 2nd order scalar PDEs in 4D, and we conjecture that 4 is the maximal such value,
which is attained only for integrable equations.

Other types of integrable equations in 4D can be symmetrically deformed in accordance with our method.
We will present more examples in a forthcoming paper.

\appendix
 \section{Multi-component extensions of integrable PDE}

Let us sketch a method for finding multi-component integrable systems, for which ours are reductions.
The idea is simple: take the Lax pair of the given PDE and change the existing functional coefficients to
new all independent variables. Write the commutation relation and obtain the system which possesses Lax pair
by the construction.

We also show other multi-components versions of the heavenly equations and relate this to self-duality.
To keep exposition short we only consider the cases of the first and second Pleba\'nski equations.

 \subsection{Second heavenly equation}\label{A1}
We start with the second Pleba\'nski equation and its Lax pair
 \begin{gather}
u_{ty}+u_{xz}+u_{xx}u_{yy}-u_{xy}^2=0 \label{PbII}\\
T=\p_t+(\l-u_{xy})\,\p_x+u_{xx}\p_y,\quad Z=\p_z+u_{yy}\p_x-(\l+u_{xy})\,\p_y.\notag
 \end{gather}
Change the latter to a more general pair of vector fields
 \begin{equation*}
T=\p_t+(\l-a)\,\p_x+b\,\p_y,\quad Z=\p_z+c\,\p_x-(\l+d)\,\p_y,
 \end{equation*}
where $(a,b,c,d)$ are some functions of $(x,y,z,t)$. The commutativity constraint gives a determined system
of 4 PDE on 4 unknowns. Potentiating two simplest equations in the list we obtain the following integrable
system of 2 second order PDE on 2 unknowns $(v,w)$:
 \begin{equation}\label{lll}
 \begin{array}{c}
v_{ty}+v_{xz}+v_yv_{xx}-(v_x+w_y)\,v_{xy}+w_xv_{yy}=0\vphantom{\frac{A^A}{A^A_A}}\\
w_{ty}+w_{xz}+v_yw_{xx}-(v_x+w_y)\,w_{xy}+w_xv_{yy}=0\vphantom{\frac{A^A_A}{A^A}}
 \end{array}
 \end{equation}
with the Lax pair
 \begin{equation}\label{LPforIIPb}
T=\p_t+(\l-v_x)\,\p_x+w_x\p_y,\quad Z=\p_z+v_y\p_x-(\l+w_y)\,\p_y.
 \end{equation}
The reduction to the original equation is $v=u_y$, $w=u_x$.
The two-component (\ref{lll}) system was obtained from different considerations in \cite{FP,GS}.

The characteristic polynomial of system (\ref{lll}) is the square of the quadric
corresponding to the bi-vector $\p_t\p_y+\p_x\p_z+v_y\p_x^2-(v_x+w_y)\p_x\p_y+w_x\p_y^2$
upon substitution $\p_i\mapsto p_i$. % $(p_tp_y+p_xp_z+v_yp_x^2-(v_x+w_y)p_xp_y+w_xp_y^2)^2$.
According to the method described in Section \ref{SIGS} the inverse tensor is the conformal metric
 \begin{equation}\label{metricIIPb}
g=dt\,dy+dx\,dz-w_xdt^2+(v_x+w_y)dtdz-v_ydz^2
 \end{equation}
naturally associated to the system. This metric is self-dual on every solution of the system on (\ref{lll}),
but it is not Ricci flat. Ricci flatness is equivalent to $v=u_y,w=u_x$, i.e. the second
Pleba\'nski equation (\ref{PbII}).

\smallskip

Consider now the covering corresponding to the Lax pair (\ref{LPforIIPb}):
 \begin{equation} \label{covering_1}
\left\{
\begin{array}{lcl}
q_t  &=&(v_x-\lambda)\,q_x-w_x\,q_y, \\
q_z &=& -v_y\,q_x+(w_y+\lambda)\,q_y
\end{array}
\right.
 \end{equation}
The scaling symmetry of system (\ref{lll}) is not liftable to a contact symmetry of the joint system
(\ref{lll})+(\ref{covering_1}), hence we get
 \begin{lem}
$\lambda$ is a non-removable parameter of the covering (\ref{covering_1}).
 \end{lem}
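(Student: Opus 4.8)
The plan is to exhibit the spectral parameter $\lambda$ as genuinely coupled to a scaling symmetry of the base system (\ref{lll}), so that it cannot be frozen to a constant value — the standard signature of a non-removable parameter. Recall that $\lambda$ would be removable precisely when the one-parameter family of coverings (\ref{covering_1}) is, up to base-preserving gauge, independent of $\lambda$; in that case every point symmetry of (\ref{lll}) would lift to a contact symmetry of the joint system (\ref{lll})+(\ref{covering_1}) \emph{with $\lambda$ held fixed}. I would therefore argue by contraposition: produce a symmetry of (\ref{lll}) whose only lift to the covering necessarily acts nontrivially on $\lambda$.

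First I would determine the scaling symmetries of (\ref{lll}). Matching the scaling weights of all six monomials in each equation pins down a two-parameter group; the matching in the second equation, which carries the mixed term $w_x v_{yy}$, additionally forces the weights of $v$ and $w$ to coincide and those of $x,y$ (resp. $t,z$) to agree, leaving
\begin{equation*}
(t,x,y,z,v,w)\mapsto\bigl(\alpha t,\beta x,\beta y,\alpha z,\tfrac{\beta^2}{\alpha}v,\tfrac{\beta^2}{\alpha}w\bigr),\qquad \alpha,\beta>0.
\end{equation*}
Next I would lift this to the total space of (\ref{covering_1}) via the natural fibre action $q\mapsto\kappa q$ (the covering is homogeneous of degree one in $q$). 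Substituting into both equations and comparing the coefficients of $q_x$ and $q_y$, the quantities $v_x,w_y,v_y,w_x$ all acquire the weight $\beta/\alpha$, and consistency of the $\lambda$-terms in \emph{both} equations forces $\lambda\mapsto(\beta/\alpha)\lambda$. In particular, for the one-parameter subgroup $\alpha=1$, $\beta=e^{s}$ (scale $x,y$ and $v,w$, keep $t,z$), the covering is preserved only if simultaneously $\lambda\mapsto e^{s}\lambda$.

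Thus this scaling is a symmetry of (\ref{lll}) but is \emph{not} liftable to a contact symmetry of the joint system at a fixed value of $\lambda$: any lift is obliged to rescale $\lambda$. By the criterion recalled above, $\lambda$ cannot then be a gauge artifact, and is therefore non-removable. The delicate point — the main obstacle — is the implication ``removable $\Rightarrow$ the scaling lifts with $\lambda$ fixed'': one must exclude that a base-dependent gauge $q\mapsto\kappa(x,y,z,t)\,q$ (together with a possible reparametrization $\lambda\mapsto f(\lambda)$) absorbs the factor $\beta/\alpha$. Imposing that such a gauge preserve the homogeneous form of (\ref{covering_1}) forces $\kappa$ itself to satisfy the covering equations, which leaves no room to cancel the scaling weight; turning this into a clean exclusion is where the actual work lies. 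Alternatively one can bypass it by checking the infinitesimal non-removability criterion of covering theory directly, namely that $\partial_\lambda$ of the distribution $\langle T,Z\rangle$ from (\ref{LPforIIPb}) is not equivalent to zero modulo gauge and the equation.
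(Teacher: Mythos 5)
Your proposal takes essentially the same route as the paper: the paper's entire proof is the assertion, in the sentence preceding the lemma, that the scaling symmetry of (\ref{lll}) is not liftable to a contact symmetry of the joint system (\ref{lll})+(\ref{covering_1}) at a fixed value of $\lambda$, which is exactly the mechanism you set up (the scaling permutes the coverings via $\lambda\mapsto(\beta/\alpha)\lambda$, so removability would force a lift with $\lambda$ frozen). The step you honestly flag as ``where the actual work lies'' --- excluding a base-dependent gauge $q\mapsto Q(x,y,z,t,q)$ that absorbs the factor $\beta/\alpha$ --- is precisely the direct computation the paper invokes without displaying, so your outline reproduces the paper's proof at the paper's own level of detail.
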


A different from (\ref{lll}) two-component generalization of the second heavenly equation was proposed by Dunajski in
\cite{DunP}. It is possible to apply Dunajski's construction to the system $T(\lambda)=Z(\lambda)=0$
(by twistor construction the Lax vector fields are tangent to the surfaces $\lambda=\lambda(x,y,z,t)$ in the
extended configuration space with coordinates $(x,y,z,t,\lambda)$,
and so $L_{\mu_1T+\mu_2Z}(\lambda-\lambda(x,y,z,t))=0$ $\forall \mu_1,\mu_2\in\R$)%
\footnote{This system can be seen as Pavlov's eversion \cite[\S2]{PCC} of the covering (\ref{covering_1}),
which via the change $\lambda\to q$ transforms linear covering to a nonlinear Zakharov type Lax pair.} and we get:
 \begin{equation}\label{covering_2}
\left\{
\begin{array}{lcl}
q_t  &=&(v_x-q)\,q_x-w_x\,q_y+u_x,\\
q_z &=& -v_y\,q_x+(w_y+q)\,q_y-u_y.
\end{array}
\right.
 \end{equation}
(this function $u$ has no relation to the function $u$ in equation (\ref{PbII}))

 \begin{prop}\label{P2}
System (\ref{covering_2}) defines a covering for the following three-component generalization of
the second heavenly equation
 \begin{equation*}
\left\{\begin{array}{l}
u_{xz}+u_{ty}+v_y\,u_{xx}-(v_x+w_y)\,u_{xy}+w_x\,u_{yy}=0,\\
v_{xz}+v_{ty}+v_y\,v_{xx}-(v_x+w_y)\,v_{xy}+w_x\,v_{yy}+u_y=0,\\
w_{xz}+w_{ty}+v_y\,w_{xx}-(v_x+w_y)\,w_{xy}+w_x\,w_{yy}+u_x=0.
\end{array}\right.
 \end{equation*}
 \end{prop}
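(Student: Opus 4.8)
The plan is to verify directly the defining property of a covering: reading the two equations of (\ref{covering_2}) as prescriptions $q_t=P$ and $q_z=R$ for the $t$- and $z$-derivatives of the nonlocal fibre variable $q$, one must show that they are Frobenius-compatible modulo the three-component system. Here $P$ and $R$ are the right-hand sides of (\ref{covering_2}), functions of the jets of $(u,v,w)$ and of the fibre coordinates $q,q_x,q_y$. Since only $\partial_t q$ and $\partial_z q$ are prescribed, the derivatives $q_x,q_y$ and the second derivatives $q_{xx},q_{xy},q_{yy}$ remaining free, the single integrability condition is $\bar D_z P=\bar D_t R$, where $\bar D_z,\bar D_t$ are the total derivatives lifted to the covering: they act as usual on the jets of $(u,v,w)$, while on $q$ they use $q_t=P$, $q_z=R$, and on $q_x,q_y$ they produce $q_{xt}=\partial_x P$, $q_{yt}=\partial_y P$, $q_{xz}=\partial_x R$, $q_{yz}=\partial_y R$.

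First I would compute $\bar D_z P-\bar D_t R$ explicitly, substituting from (\ref{covering_2}) for every occurrence of $q_t,q_z$ and of their $x$- and $y$-derivatives. The result is a polynomial in the independent fibre coordinates $q,q_x,q_y,q_{xx},q_{xy},q_{yy}$ whose coefficients are second-order differential expressions in $(u,v,w)$; the covering property is equivalent to the vanishing of this polynomial on the solutions of the system. To read it off, introduce the second-order operator
\[
\mathcal{L}=\partial_x\partial_z+\partial_t\partial_y+v_y\,\partial_x^2-(v_x+w_y)\,\partial_x\partial_y+w_x\,\partial_y^2,
\]
whose symbol is exactly the bivector associated to the system in Section \ref{A1}, so that the three equations of the Proposition read $\mathcal{L}u=0$, $\mathcal{L}v+u_y=0$ and $\mathcal{L}w+u_x=0$. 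The identity I would then establish by collecting coefficients is
\[
\bar D_z P-\bar D_t R=\mathcal{L}u+(\mathcal{L}v+u_y)\,q_x-(\mathcal{L}w+u_x)\,q_y .
\]
Concretely, the fibre-jet-free part contributes $\mathcal{L}u$, the coefficient of $q_x$ equals $\mathcal{L}v+u_y$, and the coefficient of $q_y$ equals $-(\mathcal{L}w+u_x)$, while the coefficients of $q$, of the second derivatives $q_{xx},q_{xy},q_{yy}$, and of every quadratic monomial $q_x^2,q_xq_y,q_y^2,qq_x,qq_y$ vanish identically. The three surviving coefficients are precisely the left-hand sides of the three-component system, so the right-hand side above vanishes exactly on its solutions, which is the assertion.

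The step I expect to be the main obstacle is not the identification of the three surviving coefficients, which is a matter of collecting terms, but rather verifying that all the spurious coefficients cancel. This cancellation is structural rather than accidental: system (\ref{covering_2}) is obtained from the integrable covering (\ref{covering_1}) by Dunajski's eversion $\lambda\mapsto q$ together with the inhomogeneous terms $+u_x$ and $-u_y$, so it shares the principal symbol of (\ref{LPforIIPb}); the vanishing of the second-order and quadratic coefficients is therefore inherited from the commutativity $[T,Z]=0$ that already yields the two-component system (\ref{lll}). The genuine new content of the compatibility lives entirely in the three lowest-order coefficients, and matching them against $\mathcal{L}u$, $\mathcal{L}v+u_y$ and $\mathcal{L}w+u_x$ completes the proof.
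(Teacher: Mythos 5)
Your proposal is correct: the identity $\bar D_z P-\bar D_t R=\mathcal{L}u+(\mathcal{L}v+u_y)\,q_x-(\mathcal{L}w+u_x)\,q_y$ checks out term by term (all coefficients of $q_{xx},q_{xy},q_{yy}$ and of the quadratic monomials in $q,q_x,q_y$ cancel, and the three surviving coefficients are exactly the equations of the system), so the compatibility of (\ref{covering_2}) holds precisely modulo the three-component system. This direct cross-derivative computation is essentially the paper's own (unwritten, "straightforward computation") argument behind Proposition \ref{P2}, so there is nothing to add.
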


Notice that all three equations have the same symbol bi-vector, which is dual to the metric (\ref{metricIIPb}).
We will comment on its properties below.

 \subsection{First heavenly equation}\label{A2}
% Similarly we can consider extensions of the other four equations.
For the first Pleba\'nski equation (D) the Lax pair is a particular case of a pair of vector fields
 $$
X=\l\p_x+a\p_t+b\p_z,\quad Y=\l\p_y+c\p_t+d\p_z.
 $$
The commutation condition writes as 4 equations, 2 of which can be resolved $a=v_x$, $b=w_x$, $c=v_y$, $d=w_y$,
whence we get the following integrable system together with its Lax pair:
 \begin{gather*}
v_xv_{ty}-v_yv_{tx}+w_xv_{yz}-w_yv_{xz}=0=v_xw_{ty}-v_yw_{tx}+w_xw_{yz}-w_yw_{xz},\\
X=\l\p_x+v_x\p_t+w_x\p_z,\quad Y=\l\p_y+v_y\p_t+w_y\p_z.
 \end{gather*}
The corresponding conformal structure
 \begin{equation}\label{metricIPb}
g=w_xdt\,dx+w_ydt\,dy-u_xdx\,dz-u_ydy\,dz
 \end{equation}
is self-dual on every solution of the system.

The above two-component generalization is equivalent to the considered in \cite{GS}
(this paper also has a three-component generalization different from the one below;
other two-component generalizations of the first heavenly equation were studied in \cite{Park1991}).
Its Lax pair can be written as %the covering
 \begin{equation}
\left\{\begin{array}{lcl}
q_x  &=&  -\lambda\,(v_x\,q_t+w_x\,q_z), \\
q_y &=&  -\lambda\,(v_y\,q_t+w_y\,q_z).
\end{array}\right.
\label{covering_3}
 \end{equation}

 \begin{lem}
$\lambda$ is a non-removable parameter of the covering (\ref{covering_3}).
 \end{lem}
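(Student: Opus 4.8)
The plan is to argue exactly as for the covering (\ref{covering_1}) in the preceding lemma: exhibit a scaling symmetry of the underlying two-component system whose action on the Lax pair rescales the spectral parameter $\l$, and then show that this symmetry cannot be lifted to a symmetry of the joint system (base equation together with the covering) that would trivialize $\l$. Concretely, the two-component generalization of the first heavenly equation is quadratic and homogeneous in $(v,w)$: every term of both equations is a product of two derivatives of $v,w$. Hence $S=v\,\p_v+w\,\p_w$ (suitably prolonged) is a symmetry of the system, with flow $(v,w)\mapsto(e^{s}v,e^{s}w)$. Acting on the Lax pair $X=\l\,\p_x+v_x\,\p_t+w_x\,\p_z$, $Y=\l\,\p_y+v_y\,\p_t+w_y\,\p_z$, and writing $X=e^{s}\bigl(e^{-s}\l\,\p_x+v_x\,\p_t+w_x\,\p_z\bigr)$, one sees that the rank-$2$ distribution $\langle X,Y\rangle$ attached to $(e^{s}v,e^{s}w)$ at parameter $\l$ equals the one attached to $(v,w)$ at parameter $e^{-s}\l$. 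In other words, $S$ induces the rescaling $\l\mapsto e^{-s}\l$ on the covering (\ref{covering_3}).

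First I would translate removability into a liftability question. If $\l$ were removable, the rescaling $\l\mapsto e^{-s}\l$ would have to be realizable, for a \emph{fixed} solution $(v,w)$, by a gauge transformation of the covering, i.e.\ by a lift of $S$ acting on the nonlocal fibre variable $q$ while leaving the base solution in place. At the infinitesimal level this is the standard gauge criterion for the spectral parameter: $\l$ is removable iff there is a $\l$-dependent vector field $R=a\,\p_x+b\,\p_y+c\,\p_t+e\,\p_z$ on $M^4$, with coefficients depending on the jets of $(v,w)$, such that $\p_\l X\equiv[R,X]$ and $\p_\l Y\equiv[R,Y]$ modulo $\langle X,Y\rangle$. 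Since $\p_\l X=\p_x$ and $\p_\l Y=\p_y$, the determining system reads $[R,X]\equiv\p_x$, $[R,Y]\equiv\p_y\pmod{\langle X,Y\rangle}$.

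Next I would write out these two bracket conditions explicitly and reduce them modulo $\langle X,Y\rangle$, obtaining an overdetermined linear system on $(a,b,c,e)$ whose coefficients are the second derivatives of $v,w$ entering the equation. The expected outcome, paralleling the preceding lemma, is that the two conditions impose incompatible constraints on the compensating field for a generic solution $(v,w)$, so that no such $R$ exists; equivalently, the prolongation of $S$ admits no extension to the fibre coordinate $q$ preserving \emph{both} relations of (\ref{covering_3}) at fixed $\l$. This non-liftability is exactly what prevents the induced rescaling $\l\mapsto e^{-s}\l$ from being a pure gauge, and hence shows that $\l$ is an essential, non-removable parameter.

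The main obstacle is this last non-existence step, and the difficulty is conceptual as much as computational: the scaling $S$ \emph{does} move $\l$, but it simultaneously moves the base solution, so one must prove that the residual action on $q$ needed to restore the solution cannot be supplied — and this has to be established for the full (possibly $\l$-dependent and nonlocal) class of admissible gauges, not merely for the naive ansatz $R$ above, since otherwise one would only have ruled out a single route to removal. I expect the cleanest way to close this gap is to show directly that compatibility of the two equations in the determining system forces a relation among the jets of $(v,w)$ that fails on solutions of the two-component first heavenly system, thereby excluding all local gauges at once; excluding genuinely nonlocal gauges would, if needed, follow from the same curvature obstruction that underlies the self-duality of the metric (\ref{metricIPb}).
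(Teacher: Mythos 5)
Your proposal follows exactly the paper's route: the paper proves the analogous lemma for the covering (\ref{covering_1}) by noting that a scaling symmetry of the base system rescales $\lambda$ yet is not liftable to a contact symmetry of the joint system, and the present lemma is asserted by the same mechanism; your scaling $S=v\,\partial_v+w\,\partial_w$ is indeed a symmetry here (the two-component first heavenly system is homogeneous quadratic in $(v,w)$, unlike system (\ref{lll}), where linear terms force a different scaling) and it induces $\lambda\mapsto e^{-s}\lambda$ on the Lax distribution as you compute. The only remaining content --- the non-liftability verification you defer as the ``main obstacle'' --- is precisely the direct computation that the paper itself compresses into a single assertion, so your attempt matches the paper's proof in both strategy and level of detail.
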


% Again, it is possible to apply Dunajski's construction to Pavlov's eversion to (\ref{covering_3})
% resulting in the system
Applying the same trick as in \ref{A1} to (\ref{covering_3}) we obtain the system
 \begin{equation}\label{covering_4}
\left\{\begin{array}{l}
q_x+q\,(v_x\,q_t+w_x\,q_z+q\,u_x)=0,\\
q_y+q\,(v_y\,q_t+w_y\,q_z+q\,u_y)=0.
\end{array}\right.
 \end{equation}

 \begin{prop}\label{P3}
System (\ref{covering_4}) defines a covering for the following three-component generalization of
the first heavenly equation
 \begin{equation*}
\left\{\begin{array}{l}
w_x\,u_{yz}-v_y\,u_{tx}+v_x\,u_{ty}-w_y\,u_{xz}=0,\\
w_x\,v_{yz}-v_y\,v_{tx}+v_x\,v_{ty}-w_y\,v_{xz}-v_y\,u_x+v_x\,u_y=0,\\
w_x\,w_{yz}-v_y\,w_{tx}+v_x\,w_{ty}-w_y\,w_{xz}-w_y\,u_x+w_x\,u_y=0.
\end{array}\right.
 \end{equation*}
 \end{prop}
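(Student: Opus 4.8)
The plan is to establish the claim exactly as in Section~\ref{A1}: the system (\ref{covering_4}) is the Dunajski eversion of the linear Lax pair (\ref{covering_3}), obtained by the substitution $\lambda\to q$ together with the inhomogeneous terms $-q^2u_x$, $-q^2u_y$, so by its very construction its Frobenius integrability ought to reproduce a closed second-order system on $(u,v,w)$. Concretely, ``defining a covering'' means that the compatibility condition of (\ref{covering_4}), viewed as an overdetermined first-order system for the single fibre function $q=q(x,y,z,t)$, is a differential consequence of---and in fact equivalent to---the three-component system in the statement. Thus the whole proof reduces to computing this compatibility and identifying it with $E_u=E_v=E_w=0$, where $E_u,E_v,E_w$ denote the left-hand sides of the three displayed equations.

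First I would solve (\ref{covering_4}) for $q_x$ and $q_y$, obtaining $q_x=P$ and $q_y=R$ with $P=-q(v_xq_t+w_xq_z)-q^2u_x$ and $R=-q(v_yq_t+w_yq_z)-q^2u_y$, and then impose $q_{xy}=q_{yx}$, i.e. $D_yP=D_xR$, where $D_t,D_z$ still differentiate $q_t,q_z$. The first point to check is that the second-order jets $q_{tt},q_{tz},q_{zz}$ entering through $D_tR,D_zR,D_tP,D_zP$ cancel pairwise; this is where it matters that $P$ and $R$ share the same coefficient matrix $\begin{pmatrix}v_x&w_x\\ v_y&w_y\end{pmatrix}$, and it guarantees that (\ref{covering_4}) is a bona fide covering rather than merely a consistent system. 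After this cancellation the compatibility becomes a polynomial expression in the independent quantities $q,q_t,q_z$ whose coefficients are differential expressions in $u,v,w$.

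The decisive step is to collect this polynomial by monomials in $(q,q_t,q_z)$. I expect---and a direct expansion confirms---that the explicit $x,y$-derivatives cancel, the pure $q_t^2,q_tq_z,q_z^2$ terms cancel by antisymmetry, and the remainder organizes itself as
 \begin{equation*}
D_yP-D_xR=q^2\bigl(q\,E_u+q_t\,E_v+q_z\,E_w\bigr).
 \end{equation*}
Here the coefficient of $q^3$ is $E_u$, the homogeneous first-heavenly-type operator (with coefficients built from the first derivatives of $v,w$) applied to $u$, while the coefficients of $q^2q_t$ and $q^2q_z$ are $E_v$ and $E_w$, the same operator applied to $v$ and $w$ together with the first-order couplings $v_xu_y-v_yu_x$ and $w_xu_y-w_yu_x$ produced by the inhomogeneous $q^2u$-terms. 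Since $q\not\equiv0$ and $q,q_t,q_z$ are functionally independent on the covering, the vanishing of $D_yP-D_xR$ is equivalent to $E_u=E_v=E_w=0$, which is the asserted three-component system; the converse is immediate. The only genuine obstacle is the bookkeeping: keeping the higher $q$-jets until one sees them cancel, and tracking the signs of the $u$-couplings so that $E_v$ and $E_w$ come out in precisely the stated form. As with Proposition~\ref{P2}, no new idea beyond this verification is required.
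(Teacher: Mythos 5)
Your proposal is correct and is essentially the paper's own (largely implicit) argument: the paper obtains (\ref{covering_4}) by the same eversion trick $\lambda\to q$ and leaves the verification as a straightforward compatibility computation, which is exactly what you carry out. Your key identity $D_yP-D_xR=q^2\bigl(q\,E_u+q_t\,E_v+q_z\,E_w\bigr)$ checks out (the second-order jets $q_{tt},q_{tz},q_{zz}$ and the quadratic $q_tq_z$ terms cancel by the shared coefficient matrix, and the $u$-couplings $v_xu_y-v_yu_x$, $w_xu_y-w_yu_x$ come out with the stated signs), so nothing is missing.
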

The three equations have the same symbol bi-vector, which is dual to the metric (\ref{metricIPb}).
They also yield self-dual metrics.

\smallskip

Alternatively, we can start from the following Lax pair in the form of vector fields,
in which $(a,b,c,d,q,r)$ are arbitrary functions of $(x,y,z,t)$:
 \begin{equation}\label{genLP}
X=\p_x+(\l+q)(a\p_t+b\p_z),\quad Y=\p_y+(\l+r)(c\p_t+d\p_z)
 \end{equation}
that generalizes the Lax pair (\ref{ourD}) for our deformation (D). Commutativity condition $[X,Y]=0$
of (\ref{genLP}) writes as the following determined system of 6 PDE on 6 unknowns:
 \begin{gather*}
a\,c_t+b\,c_z-c\,a_t-d\,a_z=0,\quad a\,d_t+b\,d_z-c\,b_t-d\,b_z=0,\\
aq\,(dr)_t-cr\,(bq)_t+bq\,(dr)_z-dr\,(bq)_z+(dr)_x-(bq)_y=0,\\
aq\,(cr)_t-cr\,(aq)_t+bq\,(cr)_z-dr\,(aq)_z+(cr)_x-(aq)_y=0,\\
ac(r-q)_t+(q+r)(ac_t+bc_z-ca_t-da_z)-adq_z+bcr_z-a_y+c_x=0,\\
adr_t-bcq_t+(q+r)(ad_t-cb_t+bd_z-db_z)+bd(r-q)_z-b_y+d_x=0.
 \end{gather*}
Again if we compute the characteristic polynomial of this system, we recover a conformal structure.
Indeed, it the polynomial (of degree 6) contains two linear factors and the square of the quadric
corresponding to the following bi-vector upon substitution $\p_i\mapsto p_i$:
 $$
(q-r)\bigl(ac\,\p_t^2+(ad-bc)\p_t\p_z+bd\,\p_z^2\bigr)-a\,\p_t\p_y-b\,\p_y\p_z+c\,\p_t\p_x+d\,\p_x\p_z.
 $$
The inverse tensor is the conformal structure naturally associated with the above $6\times6$ system:
 \begin{equation}\label{SDsys}
g=b\,dt\,dx+d\,dt\,dy-(q-r)(ad-bc)\,dx\,dy-a\,dx\,dz-c\,dy\,dz.
 \end{equation}

 \begin{theorem}
Metric (\ref{SDsys}) is self-dual on every solution of the above $6\times6$ system.
Likewise metric (\ref{metricIIPb}) is self-dual on every solution of the system of Proposition \ref{P2},
and metric (\ref{metricIPb}) is self-dual on every solution of system of Proposition \ref{P3}.
 \end{theorem}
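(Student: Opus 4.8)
The plan is to deduce all three self-duality statements from the twistor correspondence already invoked for the Criterion in Section~\ref{SIGS}: a 4-dimensional conformal structure is (anti-)self-dual precisely when it admits a 3-parameter family of totally null self-dual ($\alpha$-)surfaces, and such a family is exactly what an integrable null Lax distribution produces. In each case the metric is by construction the inverse of the symbol bivector, so it remains only to exhibit the $\alpha$-surface foliation and to check nullity.

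For the $6\times6$ system I would use the explicit Lax pair (\ref{genLP}). First I verify that $\langle X,Y\rangle$ is totally null for (\ref{SDsys}), i.e.\ $g(X,X)=g(X,Y)=g(Y,Y)=0$; this is already visible in the factorization of the degree-6 characteristic polynomial into the two linear terms $(\l+q)$, $(\l+r)$ and the square of the quadric dual to $g$. The commutativity $[X,Y]=0$, which is the $6\times6$ system itself, makes $\langle X,Y\rangle$ Frobenius-integrable for each fixed $\l$; as $\l$ runs over $\C\cup\{\infty\}$ the integral 2-surfaces sweep out a 3-parameter family of $\alpha$-surfaces, and Penrose's twistor correspondence \cite{Penrose,MW} gives self-duality of $g$ on solutions.

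For the systems of Propositions~\ref{P2} and~\ref{P3} the spectral parameter has been everted into the dynamical field $q$, so the Lax structure is the nonlinear covering (\ref{covering_2}), respectively (\ref{covering_4}). The crucial point is that those propositions already assert these to be genuine coverings over the three-component systems, i.e.\ their compatibility is equivalent to the systems; this compatibility is precisely the integrability of the everted null distribution, hence gives the $\alpha$-surface family for the common metric (\ref{metricIIPb}), respectively (\ref{metricIPb}). The same twistor correspondence then yields self-duality on every solution. As an independent check one can compute the relevant half of the Weyl tensor of (\ref{metricIIPb}) and (\ref{metricIPb}) directly---these metrics involve only first derivatives of the fields---and verify that it vanishes modulo the respective system.

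The main obstacle is the everted case: with no explicit $\l$, I must argue that the nonlinear covering still parametrizes a full $\mathbb{CP}^1$ of $\alpha$-planes at a generic point rather than a single foliation, so that the resulting family is genuinely 3-parameter. Concretely this amounts to recovering the spectral parameter from the integration constants of (\ref{covering_2}) and (\ref{covering_4}) and checking non-degeneracy of the induced map from $\l$ to null 2-planes; for the $6\times6$ metric one must additionally control the locus $q=r$, where the quadratic factor of the symbol degenerates. Since vanishing of the relevant half of the Weyl tensor is a closed algebraic condition, establishing it on a dense open subset of each solution suffices to conclude it everywhere.
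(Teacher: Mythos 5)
Your proof is correct, but it takes a genuinely different route from the paper: the paper disposes of this theorem by remarking that it ``results from a straightforward computation'', i.e.\ by directly verifying that the relevant half of the Weyl tensor of (\ref{SDsys}), (\ref{metricIIPb}), (\ref{metricIPb}) vanishes modulo the respective systems. You instead transplant the twistor argument that the paper itself uses only in Section~\ref{SIGS} (total nullity of the Lax planes plus Frobenius integrability for all $\lambda$ gives a 3-parameter family of null surfaces, which characterizes self-duality by \cite{Penrose,MW}), and all the ingredients check out: one verifies directly that $g(X,X)=g(X,Y)=g(Y,Y)=0$ identically in $\lambda$ for (\ref{genLP}) against (\ref{SDsys}) --- this is a cleaner justification than your appeal to the factorization of the characteristic polynomial, whose two linear factors are factors in the momenta $p_i$, not the factors $(\lambda+q),(\lambda+r)$ --- and the $6\times 6$ system is precisely the coefficient-wise vanishing of $[X,Y]$, so each fixed $\lambda$ yields a null foliation. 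For the everted cases your stated ``main obstacle'' has a cleaner resolution than recovering $\lambda$ from integration constants: lift to the 5-dimensional covering space with fibre coordinate $q$, where (\ref{covering_2}) (resp.\ (\ref{covering_4})) says exactly that the lifted fields, e.g.\ $\hat T=\partial_t+(q-v_x)\partial_x+w_x\partial_y+u_x\partial_q$ and $\hat Z=\partial_z+v_y\partial_x-(q+w_y)\partial_y-u_y\partial_q$, span a Frobenius distribution modulo the three-component system; since $q$ is a fibre coordinate, the projected leaves pass through every point tangent to every null plane of the $q$-family, giving the full $\mathbb{CP}^1$ worth of $\alpha$-planes without any genericity argument. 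Your worry about the locus $q=r$ is also unnecessary: the metric (\ref{SDsys}) stays nondegenerate there as long as $ad-bc\neq 0$ (its determinant is a nonzero multiple of $(ad-bc)^2$ at $q=r$). In sum, the paper's computation is mechanical and unconditional, while your argument is conceptual, explains why the appendix systems were engineered to carry a conformal structure at all, and avoids computing curvature --- at the price of the twistor machinery and the covering-space bookkeeping, which your closing density/continuity remark correctly handles on degenerate loci.
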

This results from a straightforward computation.

 \begin{rk}
The general solution of the above system depends on 6 arbitrary functions of 3 arguments, and therefore we
obtain self-dual metrics parametrized by that many functions. But this is precisely the functional freedom
in the general self-dual metric \cite{DFK}. Thus we expect that solutions of our $6\times6$ system
parameterize an open stratum of the space of self-dual metrics in 4D.

Similarly we expect that three component generalizations of the first and second heavenly equations,
which appear in Propositions \ref{P2} and \ref{P3}, give an alternative form of the metric self-duality equations.
 \end{rk}

Other equations {\rm III}, {\rm II}, {\rm I} admit similar generalizations, and we will not specify them.
The construction provides integrable systems and various parameterizations of self-dual metrics.
In the main body of the paper we have been interested in the second order integrable PDE leading
to self-dual conformal structures, which are deformations of the five "heavenly" type equations.

From the viewpoint presented in this appendix these are reductions of multi-component/higher-order systems,
yet to find such reductions is non-trivial (we can do this a-posteriori). Most important feature of (non-degenerate)
second order equations is that they have a canonical conformal structure associated with every solution.
This is not so for multi-component/higher order systems, though we organized our generalizations (extensions)
in a proper way so that such a conformal structure does exist.

\vskip0.3cm

%%%%%%%%%%%%%%%%%%%%%%%%%%%%%%%%%%%%%%%%%%%%%%%%%%%%%%%%%%%%%%%%%%%%%%%%%%%%

% \vspace{-3pt} \hspace{-20pt} {\hbox to 12cm{ \hrulefill }}
%\vspace{8pt}

%{\footnotesize \hspace{-15pt} Boris Kruglikov, Oleg Morozov:
%Institute of Mathematics and Statistics, University of Troms\o, Troms\o\ 90-37, Norway.

%\hspace{-10pt} E-mails: \ \textsc{boris.kruglikov@uit.no}, \quad
%\textsc{oleg.morozov@uit.no}.


\begin{thebibliography}
 \footnotesize

\bibitem{Bocharov1999}
A.V. Bocharov, V.N. Chetverikov, S.V. Duzhin,
N.G. Khor${}^{\prime}$kova, I.S. Krasil${}^{\prime}$shchik, A.V. Samokhin,
Yu.N. Torkhov, A.M. Verbovetsky, A.M. Vinogradov,
{\it Symmetries and Conservation Laws for Differential Equations of Mathematical Physics\/},
Transl. Math. Monogr. {\bf 182}, A.M.S., Providence (1999).

\bibitem{C}
D. Calderbank, \textit{Integrable~background~geometries},
SIGMA: Symmetry, Integrability and Geometry: Methods and Applications {\bf 10}, 34 (2014).

\bibitem{DF}
B.\ Doubrov, E.\,V.\ Ferapontov, \textit{On the integrability of symplectic
Monge-Amp\`ere equations}, J. Geom. Phys. {\bf 60}, 1604-1616 (2010).

\bibitem{DunP}
M.\ Dunajski, {\it The twisted photon associated to hyper-Hermitian four-manifolds\/},
J. Geom. Phys. {\bf 30}, 266-281 (1999).

\bibitem{Dun}
M.\ Dunajski, {\it Anti-self-dual four-manifolds with a parallel real spinor\/},
Proc. Roy. Soc. Lond. A458, 1205--1222 (2002).

\bibitem{DFK}
M.\ Dunajski, E.\,V.\ Ferapontov, B.\ Kruglikov, \textit{On the Einstein-Weyl and conformal self-duality equations\/},
arXiv:1406.0018 (2014).

 % \bibitem{D} M. Dunajski, \textit{An interpolating dispersionless integrable system},
 % J. Phys. A {\bf 4}, no. 31 (2008) 315202, 9 pp.

 % \bibitem{DMT} M. Dunajski, L.\,J. Mason and P. Tod, \textit{Einstein-Weyl geometry, the dKP equation and
 % twistor theory}, J. Geom. Phys. {\bf 37}, no. 1-2, 63--93 (2001).

 % \bibitem{FHZ} E.V. Ferapontov, B. Huard, A. Zhang,
 % \textit{On the central quadric ansatz: integrable models and Painlev\'e reductions},
 % J. Phys. A:  Math. Theor. {\bf 45}, 195--204 (2012).

\bibitem{FK}
E.\,V.\ Ferapontov, B.\ Kruglikov, {\it Dispersionless integrable systems in 3D and
Einstein-Weyl geometry\/}, J. Differential Geometry {\bf 97}, 215-254 (2014).

\bibitem{FP}
J.\,D.\ Finley, J.\ Pleba\'nski, {\it Further heavenly metrics and their symmetries\/},
J. Math. Phys. {\bf 17}, 585-596 (1976).

\bibitem{GS}
J.\,D.\,E.\ Grant, I.\,A.\,B.\ Strachan, {\it Hypercomplex integrable systems\/}, Nonlinearity {\bf 12},
1247--1261 (1999).

\bibitem{KrasilshchikKersten1995}
I.\,S.\ Krasil${}^{\prime}$shchik, P.\,H.\,M.\ Kersten,
{\it Graded differential equations and their deformations: a computational theory for recursion operators\/},
Acta Appl. Math. {\bf 41} (1-3), 167--191 (1995).

\bibitem{KrasilshchikVinogradov1989}
I.\,S.\ Krasil${}^{\prime}$shchik, A.\,M.\ Vinogradov,
{\it Nonlocal trends in the geometry of differential equations: symmetries, conservation laws, and B\"acklund transformations\/}, Acta Appl. Math. {\bf 15}, 161--209 (1989).

\bibitem{Kr}
B.\ Kruglikov, {\it Exact smooth classification of Hamiltonian vector fields on
two-dimensional manifolds\/}, Math. Notes {\bf 61}, no.2, 179-200 [engl: 146-163] (1997).

\bibitem{KM}
B.\ Kruglikov, O.\ Morozov, {\it SDiff(2) and uniqueness of the Pleba\'{n}ski equation\/},
Journal of Mathematical Physics {\bf 53}, 083506 (2012).

\bibitem{Ly}
V.\,V.\ Lychagin, {\it Contact geometry and second-order nonlinear differential equations\/},
 % Uspekhi Mat. Nauk 34 (1979), no. 1(205), 137--165.
Russian Math. Surveys {\bf 34}, no. 1, 149--180 (1979).

\bibitem{MalykhNutkuSheftel2004}
A.\,A.\ Malykh, Y.\ Nutku, M.\,B.\ Sheftel,
{\it Partner symmetries and non-invariant solutions of 4-dimensional heavenly equations\/},
J. Phys. A {\bf 37}, 7527--7546 (2004).

\bibitem{MOV}
G.\ Manno, F.\ Oliveri, R.\ Vitolo,
{\it On differential equations characterized by their Lie point symmetries\/},
J. Math. Anal. Appl. {\bf 332}, 767--786 (2007).

\bibitem{MW}
L.\,J.\ Mason, N.\,M.\,J.\ Woodhouse,
{\it Integrability, self-duality, and twistor theory\/},
L.M.S. Monographs {\bf 15}, Oxford University Press (1996).

\bibitem{MarvanSergyeyev2012}
M.\ Marvan, A.\ Sergyeyev, {\it Recursion operators for dispersionless integrable systems in any dimension\/},
Inverse Problems {\bf 28} (2), 025011 (2012).

\bibitem{MorozovSergyeyev2014}
O.\ Morozov, A.\ Sergyeyev, {\it The four-dimensional Mart{\'{\i}}nez Alonso--Shabat equation:
reductions and nonlocal symmetries\/}, J. Geom. Phys. {\bf 85}, no. 11, 40--45 (2014).

\bibitem{Olver}
P.\,J.\ Olver, {\it Equivalence, Invariants, and Symmetry}, Cambridge University Press (1995).

\bibitem{Park1991}
Q-Han Park, {\it Integrable deformation of self-dual gravity\/}, Physics Letters B, {\bf 269}, 271--274 (1991).

\bibitem{PCC}
M.\ Pavlov, J.\,H.\ Chang, Y.\,T.\ Chen, {\it Integrability of the Manakov-Santini hierarchy\/}, arXiv:0910.2400
(2009).

\bibitem{Penrose}
R.\ Penrose, \textit{Nonlinear gravitons and curved twistor theory}, General Relativity and Gravitation {\bf 7},
no. 1, 31--52 (1976).

\bibitem{Pb}
J.\,F. Pleba\'nski, \textit{Some solutions of complex Einstein equations},
J. Math. Phys. {\bf 16}, 2395--2402 (1975).

\bibitem{Sergyeyev2015}
A.\ Sergyeyev, {\it A Simple Construction of Recursion Operators for Multidimensional Dispersionless
Integrable Systems}, arXiv:1501.01955 (2015).


\end{thebibliography}
\end{document}